\documentclass[11pt,letterpaper,twoside]{article}

\usepackage[letterpaper,margin=0.75in,includeheadfoot,headheight=14pt]{geometry}
\usepackage{fancyhdr}

\usepackage{amsmath,amssymb,amsthm}

\usepackage{graphicx}
\usepackage[dvipsnames]{xcolor}
\usepackage{float}
\usepackage{booktabs}
\usepackage{multirow}
\usepackage{makecell}
\usepackage{caption}
\usepackage{enumitem}
\usepackage{pdflscape}
\usepackage{setspace}
\usepackage{url}
\usepackage{parskip}
\usepackage{ragged2e}
\usepackage[
  colorlinks=true,
  linkcolor=Blue,
  citecolor=Blue,
  urlcolor=Blue
]{hyperref}

\usepackage[authoryear,round]{natbib}
\usepackage{bibunits}

\newcommand{\ManuscriptTitle}{Additive Nonparametric Regression with Spatial and Network Objects}
\newcommand{\RunningTitle}{Additive Nonparametric Regression with Spatial and Network Objects}
\newcommand{\RunningAuthors}{Guhaniyogi, Dey, Chandra, Scheffler, and Mallick}

\title{\ManuscriptTitle}
\author{%
  Rajarshi Guhaniyogi\textsuperscript{1},
  Pritam Dey\textsuperscript{1},
  Krishnendu Chandra\textsuperscript{2},\\
  Aaron Scheffler\textsuperscript{3}, and
  Bani K. Mallick\textsuperscript{1}\\[0.5em]
  \small \textsuperscript{1}Department of Statistics, Texas A\&M University, College Station, TX, USA\\
  \small \textsuperscript{2}School of Medicine, Indiana University, Indianapolis, IN, USA\\
  \small \textsuperscript{3}Division of Biostatistics and Epidemiology,\\
  \small University of California, San Francisco, CA, USA
}
\date{}

\newtheorem{theorem}{Theorem}
\newtheorem{lemma}{Lemma}
\newtheorem{result}{Result}

\newcommand{\ba}{\boldsymbol{a}}
\newcommand{\bA}{\boldsymbol{A}}
\newcommand{\bg}{\boldsymbol{g}}
\newcommand{\bG}{\boldsymbol{G}}
\newcommand{\bI}{\boldsymbol{I}}
\newcommand{\bL}{\boldsymbol{L}}
\newcommand{\bo}{\boldsymbol{o}}
\newcommand{\bO}{\boldsymbol{O}}

\newcommand{\bu}{\boldsymbol{u}}
\newcommand{\bU}{\boldsymbol{U}}
\newcommand{\bv}{\boldsymbol{v}}
\newcommand{\bw}{\boldsymbol{w}}
\newcommand{\bW}{\boldsymbol{W}}
\newcommand{\bx}{\boldsymbol{x}}
\newcommand{\bbeta}{\boldsymbol{\beta}}
\newcommand{\bepsilon}{\boldsymbol{\epsilon}}
\newcommand{\bPhi}{\boldsymbol{\Phi}}
\newcommand{\bmu}{\boldsymbol{\mu}}
\newcommand{\bet}{\boldsymbol{\eta}}
\newcommand{\bzero}{\boldsymbol{0}}

\AtBeginDocument{%
  \setlength{\abovedisplayskip}{6pt plus 2pt minus 2pt}
  \setlength{\belowdisplayskip}{6pt plus 2pt minus 2pt}
  \setlength{\abovedisplayshortskip}{3pt plus 2pt minus 1pt}
  \setlength{\belowdisplayshortskip}{3pt plus 2pt minus 1pt}
}

\begin{document}

\begin{bibunit}[plainnat]

\maketitle
\thispagestyle{plain}

\begin{abstract}
\noindent
This article is motivated by an imaging application from the Adolescent Brain Cognitive Development (ABCD) study, aiming to predict task-based brain activation maps from t-fMRI using cortical metrics from structural MRI (s-MRI) and brain connectivity data from resting-state fMRI (rs-fMRI). Hierarchical Bayesian modeling is well-suited for integrating diverse imaging data and quantifying prediction uncertainty. However, progress in this field is limited due to challenges in designing joint priors that capture the structures and interconnections between different imaging modalities, along with computational complexity and lack of theoretical assurances. To address these challenges, the article introduces a novel regression framework that treats t-fMRI and s-MRI images as functional data, incorporating additive non-linear effects of both network and functional predictors on the functional response. Specifically, we employ Gaussian process (GP) priors on coefficients related to the functional predictors to capture their intricate functional dependencies with the response. Furthermore, a GP prior is assigned to encapsulate the non-linear nodal effects of the network predictor on the response function. The method is supported by theoretical results on predictive accuracy for the functional response, and is empirically validated through simulation studies and analysis of multi-modal neuroimaging data from the ABCD study. Additional details regarding model computation, posterior consistency of the proposed model and empirical results are available in the supplementary material.
\end{abstract}

\noindent\textbf{Keywords:} Additive nonparametric regression; Gaussian process prior; multi-modal neuroimaging data; network model.

\setstretch{1.6}

\section{Introduction}
This article considers localization of tasked-evoked brain activation that can allow researchers to map regions of interest (ROI) associated with cognitive domains, identify biomarkers associated with neurocognitive disorders, and map functional regions for neurosurgical planning \citep{jones2017}. Despite the research and clinical utility of task-based activation maps, several works have found that task-based fMRI (t-fMRI) activation maps have limited reliability and accuracy and thus it is of interest to determine how well individual task mapping can be recovered from more stable features such as morphological features (e.g. cortical thickness and sulcal depth) captured via structural MRI (s-MRI) and resting state connectivity networks captured via fMRI (rs-fMRI) \citep{weng2018, elliot2020}. 
Resting state connectivity networks has edges describing connectivity among \emph{functional nodes} (brain subnetworks, e.g. auditory) \citep{gordon2017}, and operate at a different scale than t-fMRI and s-MRI, which are defined within spatially indexed ROIs. However, these modalities are topologically linked via the brain’s hierarchical organization (see Figure~\ref{fig1}), with ROIs nested within functional nodes (subnetworks). Statistically, this setup is formulated as a regression problem where the spatial t-fMRI activation map is the response, regressed against spatially indexed s-MRI and the brain network. The primary goal is to model this regression to capture non-linear associations and enable robust predictive inference for image responses.

\begin{figure}
\centering
\includegraphics[width=\textwidth]{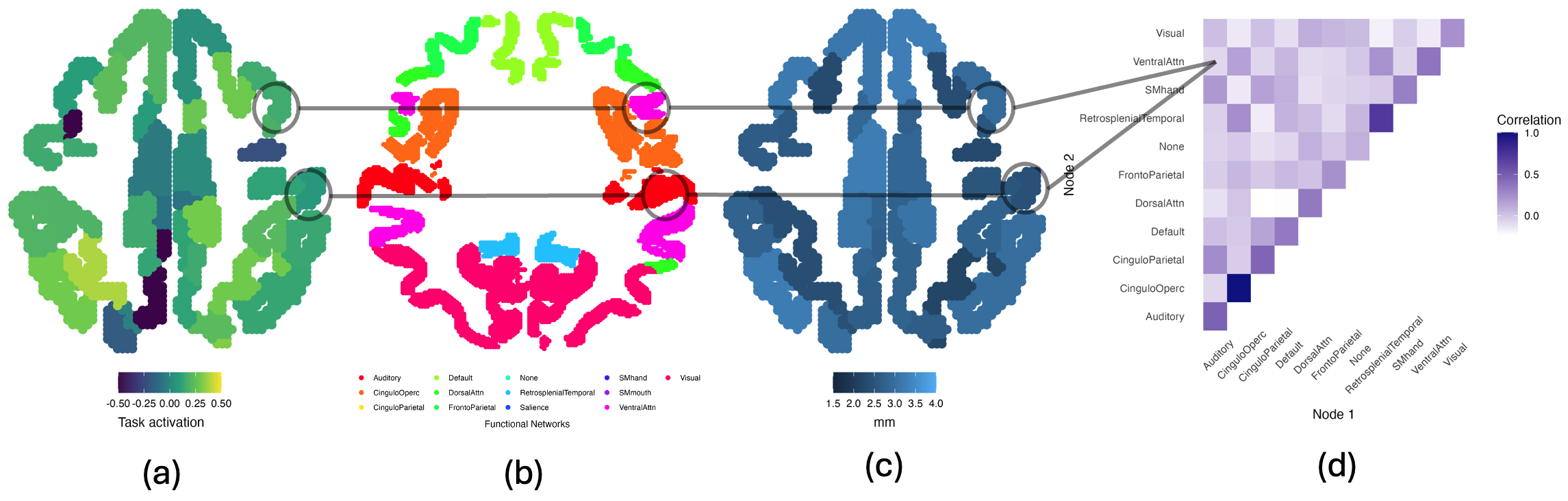}
\caption{\footnotesize{{Schematic of the multi-object brain imaging data structure for a sample subject. (a) Structural image encoding ROI-level task activation, (b) Gordon functional atlas parcellation of the brain into functional subnetworks, (c) structural image encoding ROI-level cortical thickness (mm) (d) Network image obtained by calculating the pairwise Pearson correlation for the average within and between subnetworks. Grey circles and lines connect (a, c) structural and (d) network information from images via the (b) parcellated atlas. Thus, the atlas provides an organizing hierarchy that links together structural information (task activation) at the ROI-level with network information indexed by pairs of functional networks (fMRI).}}}
\label{fig1}
\end{figure}

The literature on regression models involving images can be divided into three main areas: (a) scalar-on-image regression, (b) image-on-scalar regression, and (c) image-on-image regression. In scalar-on-image regression, images serve as predictors for a scalar outcome, while in image-on-scalar regression, images are the response variable. Both approaches have been extensively studied, with images treated as spatially correlated functions \citep{goldsmith2014smooth, feng2019bayesian, lin2024latent, roy2021spatial, kang2018scalar}, or as tensors and networks \citep{guhaniyogi2017bayesian, guha2020bayesian, guha2024covariate}. A less explored but highly relevant area is image-on-image regression, which focuses on studying associations between imaging modalities, closely aligned with the goals of our application in Section~\ref{data application}.

In image-on-image regression, adaptive smoothing and spatially varying coefficient models have been introduced, allowing for better integration of neighboring voxel information \citep{niyogi2023tensor, mu2018estimation}. Spatial latent factor models have also been developed to capture complex spatial dependencies \citep{guo2022spatial}. Although these techniques are effective, they do not account for situations where spatial predictors are defined at ROIs while the network predictor is constructed using nodes that represent brain sub-networks. Since each ROI is nested within a sub-network, this leads to a misalignment between the spatial and network predictors. A second line of research explores multivariate support vector machines for predicting missing spatial or temporal information in fMRI and EEG data \citep{de2011predicting, jansen2012motion}. Recently, deep learning methods, particularly adversarial networks, have been applied to learn mappings between image inputs and output for translation, style transfer, and data augmentation \citep{huang2018, ma2023}.

This article introduces Additive Multi-Object Gaussian Process (AMO-GP), a two-stage semi-parametric regression framework to address image-on-image regression, where predictor images include spatially-varying functions and a network. In the first stage, each network edge is modeled as a bi-linear effect of latent vectors corresponding to the two connected nodes to estimate node-specific latent effects. In the second stage, we employ a semi-parametric regression framework, assuming additive effects of various image predictors on the response. The effect of spatially-indexed functional predictors are modeled with spatially varying coefficient functions, while effect of each node from the network predictor are represented through a non-linear function of the estimated latent effects from the first stage. Gaussian process (GP) priors are assigned to the spatially varying coefficients and the function capturing node-specific latent effects to enable information sharing across ROIs and subjects. An additive framework provides interpretable insights into the influence of each predictor, facilitates regularization, as modeling only additive effects reduces model complexity, particularly when data are limited. Also, the additive structure circumvents the need to specify joint covariance kernels across all predictors. This enables flexible integration of diverse image data while ensuring accurate uncertainty quantification for inference.

The rest of the manuscript is organized as follows. Section~\ref{data application} overviews the multi-modal neuroimaging data and outlines the main scientific questions. Section~\ref{model_development} details model development and the prior, while Section~\ref{posterior_comp} discusses posterior computation. Section~\ref{sim_studies} presents simulation studies validating AMO-GP, and Section~\ref{multi_modal_data_analysis} showcases its application to multimodal neuroimaging data. Section~\ref{conclusion} concludes the paper. 

\section{Multi-modal Neuroimaging Data Application}\label{data application}

\noindent We explore a clinical application derived from multimodal imaging studies in 9–10 year-old children from the Adolescent Brain Cognitive Development (ABCD) Study, the largest U.S. study on brain development \citep{casey2018adolescent}. The focus is on characterizing the association between task-based activation maps measuring local neuronal activity during tasks, and brain image predictors capturing cortical morphology and resting state connectivity.

\noindent\underline{\textbf{Scientific question of interest:}} Task-based activation maps are based on fMRI images of local blood-oxygen-level-dependent (BOLD) signals collected on subjects engaged in a task \citep{ellis2022}. Localization of tasked-evoked responses allows researchers to map regions of interest (ROIs) associated with cognitive domains, identify biomarkers associated with neurocognitive disorders, and map functional regions for neurosurgical planning \citep{jones2017}. Despite the clinical utility of task-based activation maps, several works have found that individual task-based fMRI (t-fMRI) activation maps may display low reliability and thus it is of interest to determine how well task mapping can be recovered from more stable features such as morphological features and resting state connectivity maps \citep{weng2018, elliot2020}. Specifically, even if individual t-fMRI activation maps display low reliability, it may be possible to recover expected associations between t-fMRI signals and more stable neurological markers by analyzing images from multiple subjects. For example, \cite{cole2016} demonstrated that coordinated task-based activation follows resting state functional connectivity networks which suggests resting state functional networks could be used to help triangulate task activation in the future. Our focus is to model task-based activation in response to a working memory task as a function of structural and network images. Working memory, which underlies the temporary storage and manipulation of information, is critical for cognition and has been extensively studied in relation to cognitive development and neural substrates \citep{rosenberg2020}. Specifically, we analyze task-based activation from the working memory task (described below) as a function of cortical morphological features and resting state connectivity, as both have been previously linked to working memory in childhood \citep{osaka2021}. 

Most existing models relate t-fMRI activation maps to brain structure or function using features from a single imaging modality. Associations with sMRI have been modeled using regression \citep{squeglia2013}, deep learning \citep{ellis2022}, and latent factor models \citep{guo2022spatial}, while rs-fMRI connectivity links have leveraged linear models \citep{tavor2016, cohen2020}, permutation tests \citep{harrewijn2020}, neural networks \citep{cohen2020, ngo2022}, ensemble methods \citep{cohen2020, zheng2022}, and manifold learning \citep{langs2015}. Joint analysis of t-fMRI activation with both s-MRI and rs-fMRI features is rare: vectorized feature approaches \citep{tavor2016} do not clarify structural or network contributions, and spatial-network models \citep{ma2023} obscure additive effects. Here, we integrate cortical morphology and functional connectivity to predict task activation, and, to our knowledge, no previous structured regression has jointly modeled t-fMRI, s-MRI, and rs-fMRI associations.

\noindent\underline{\textbf{Clinical images and working memory evaluation:}} We utilized the ABCD Study 5.0 Tabulated Release Data (\url{http://dx.doi.org/10.15154/8873-zj65}), focusing on baseline imaging and cognitive measures from a random subsample of 80 children (aged 9–10 years) from a single study site. Sample size is restricted due to computational constraints (see Section~\ref{posterior_comp}). The imaging data included: task-based fMRI (t-fMRI) for measuring focal brain activation, sMRI for cortical morphological features, and resting state fMRI (rs-fMRI) for assessing brain activation through neuronal oxygen consumption during rest  \citep{hagler2019}.

All images are registered in the template space of the Montreal Neurological Institute (MNI). Working memory was measured using fMRI activation (Figure \ref{fig1}a) of the emotional N-back task using the 2-back vs. 0-back linear contrast described in \citep{hagler2019} for 148 spatially indexed ROIs defined by the Destrieux atlas \citep{destriuex2010}. Morphological features include cortical thickness and sulcal depth which estimate the thickness of the cortex and depths of folds in the cortex, respectively, (Figure \ref{fig1}) and were measured using sMRI and extracted using Freesurfer for the Destrieux atlas (described above) as in \citep{hagler2019}. Resting state fMRI data was collected for 333 cortical-surface ROIs defined by the Gordon atlas \citep{gordon2017} and ROIs were subsequently grouped into 13 brain subnetworks (auditory, cingulo-opercular, cingulo parietal, default mode, dorsal attention,  frontoparietal, none, retrosplenial temporal,  salience, sensorimotor-hand, sensorimotor-mouth, ventral attention and visual subnetworks \citep{gordon2017}). A symmetric adjacency matrix was constructed following \citep{hagler2019}, with rows and columns representing different subnetworks. Matrix entries correspond to Z-scores obtained by Fisher Z-transforming the average Pearson correlation for each pairwise combination of ROIs between subnetworks (Figure \ref{fig1}d).

To align the Destrieux and Gordon atlases with differing parcellations, each Destrieux ROI was assigned to a Gordon subnetwork by minimizing the Euclidean distance between their centroids. This enabled cross-modality and subject comparisons. Each Destrieux ROI was thus nested within a Gordon subnetwork (Figure \ref{fig1}b). Centroid matching was used given the ROI-level focus of the analysis, with coordinates obtained via the R package brainGraph \citep{brainGraph}. The median distance between matched ROIs was 6.6 mm (range: 1.4–17.2 mm), and visual checks confirmed accuracy. This alignment nested 145 Destrieux ROIs within 11 Gordon subnetworks, excluding salience and sensorimotor-mouth networks due to sparse coverage. Atlas selection and alignment sensitivity are discussed in Section~\ref{conclusion}.

\section{Model and Prior Development}\label{model_development}

\subsection{Multi-object image characterization}

We propose Additive Multi-Object Gaussian Process (AMO-GP) a multi-object framework which jointly captures both spatial and network information. AMO-GP represents each image either as a spatially indexed function or as a network object, linking them through a shared topology. For the $i$\textsuperscript{th} individual, let $\bA_i$ denote the undirected brain network, represented by a $P\times P$ symmetric matrix derived from rs-fMRI data. The $P$ nodes correspond to brain subnetworks $\mathcal{R}_1,\ldots,\mathcal{R}_P$ with the $(p,p')$-th entry $a_{i,(p,p')}$ indicating the strength of association between $\mathcal{R}_p$ and $\mathcal{R}_{p'}$. The network is symmetric (i.e. $a_{i,(p,p')}=a_{i,(p',p)}$, for all $1 \leq p,p' \leq P$) and excludes self-connections ($a_{i,(p,p)}=0$). Assume $V_p$ ROIs are nested within subnetwork $\mathcal{R}_p$. Functional predictors $\bg_{i,p}$, $\bw_{i,p}$ along with the functional response $\bo_{i,p}$ are defined over these ROIs for $p=1,\ldots,P$. Specifically, the values for the $\bv=(v_1, \ldots, v_D)$\textsuperscript{th} ROI coordinate are given by $g_{i,p}(\bv)$, $w_{i,p}(\bv)$, and $o_{i,p}(\bv)$, respectively, where $D$ depends on whether the analysis focuses on a cross-section ($D=2$) or the entire brain ($D=3)$. 

\subsection{Model Framework}\label{mf}

AMO-GP is a two-stage Bayesian approach that blends ideas from manifold regression for networks and non-parametric modeling of regression effects from the functional predictors. A two-stage framework is adopted for computational feasibility and to avoid slow convergence of latent variables. Since the latent scale representation effectively captures the underlying network topology, the two-stage approach in AMO-GP does not result in any noticeable loss in predictive uncertainty when applied to the ABCD data for predicting t-fMRI images.
\subsubsection{Stage 1: Latent scale representation of brain networks} \label{mf1}
As it's first stage, AMO-GP models the undirected network predictor $\mathbf{A}_i$ through its edge set $\mathbf{a}_i = \{a_{i,(p,p')}: 1 \leq p < p' \leq P\}$ for $i = 1, \ldots, n$. Since the number of edges $P(P-1)/2$ grows rapidly with $P$, we adopt a low-dimensional latent representation to capture network structure. Specifically, we consider
\begin{align}\label{latent_scale}
\eta^{-1}(E[a_{i,(p,p')}|\mu_i,\tilde{\bu}_{i,p},\tilde{\bu}_{i,p'},\sigma_u^2])=\mu_i+\tilde{\bu}_{i,p}^{T}\tilde{\bu}_{i,p'}+\epsilon_{i,(p,p')},\qquad \epsilon_{i,(p,p')}\sim N(0,\sigma_u^2),
\end{align}
where $\eta(\cdot)$ is a suitable link function. Here, $\mu_i$ is a subject-specific intercept controlling overall network density, and $\tilde{\bu}_{i,p}\in\mathbb{R}^R$ is a latent vector for node $\mathcal{R}_p$. The inner product $\tilde{\bu}_{i,p}^T \tilde{\bu}_{i,p'}$ captures connectivity, reducing the parameter dimension from $P(P-1)/2$ to $PR$. Although methods for selecting \citep{hoff2005bilinear, guhaniyogi2016compressed} or inferring \citep{guha2020bayesian} $R$ exist, they are computationally intensive. Given the stability of predictive performance across a wide range of $R$, we fix $R$ in AMO-GP.

The model in \eqref{latent_scale} is closely related to the random dot product graph model \citep{xie2023efficient}. In latent space modeling of a single network, this representation provides a sufficiently general characterization of interconnection structure. We assign a flat prior on each $\mu_i$ and the priors $\tilde{\bu}_{i,p} \stackrel{i.i.d.}{\sim} N(\bzero,\bI_R)$ and $\sigma_u^2 \sim IG(\alpha,\beta)$, and perform posterior inference via  Markov Chain Monte Carlo (MCMC). Since the likelihood in \eqref{latent_scale} depends on $\tilde{\bu}_{i,p}$'s only through inner products, these latent variables are identifiable up to orthogonal transformations. We therefore apply a ``Procrustean transformation" \citep{guha2020bayesian} to align MCMC samples to a common reference orientation across subjects and treat the mean of the transformed post-burn-in draws as a point estimate $\widehat{\bu}_{i,p}$ of ${\bu}_{i,p}$. These point estimates are subsequently used as covariates in the second-stage regression.

\subsubsection{Stage 2: Additive Semi-parametric Regression Framework} \label{mf2}
In the second stage of AMO-GP, we formulate a non-linear regression function to delineate association between the functional response $o_{i,p}(\bv)$ and functional predictors $g_{i,p}(\bv)$, $w_{i,p}(\bv)$, and the estimated latent node effect $\widehat{\bu}_{i,p}$ corresponding to the $p$th node/subnetwork. The unknown regression function $f_{p,\bv}$, characterizing the association between response and these predictors in the $\bv$th ROI of the $p$th subnetwork, is decomposed into three components, $f_{p,\bv}=f_p^{N}+f_{p,\bv}^{G}+f_{p,\bv}^{W}$. Here, $f_p^{N}$, $f_{p,\bv}^{G}$, and $f_{p,\bv}^{W}$ represent the non-linear functional effects of $\widehat{\bu}_{i,p}$, $g_{i,p}(\bv)$, and $w_{i,p}(\bv)$, respectively. More precisely, we model:
\begin{align}\label{additive_reg}
o_{i,p}(\bv)=f_p^{N}(\widehat{\bu}_{i,p})+f_{p,\bv}^{G}(g_{i,p}(\bv))+f_{p,\bv}^{W}(w_{i,p}(\bv))+\epsilon_{i,p}(\bv),
\end{align}
where the idiosyncratic errors are assumed to follow $\epsilon_{i,p}(\bv)\stackrel{i.i.d.}{\sim} N(0,\tau_p^2)$.
In AMO-GP, we assume that $f_{p,\bv}^{G}(g_{i,p}(\bv))$ and $f_{p,\bv}^{W}(w_{i,p}(\bv))$ are linear but have coefficients that vary for each ROI, i.e.,  $f_{p,\bv}^{G}(g_{i,p}(\bv))=g_{i,p}(\bv)\bbeta_{p,g}(\bv)$ and $f_{p,\bv}^{W}(w_{i,p}(\bv))=w_{i,p}(\bv)\bbeta_{p,w}(\bv)$. This formulation is an instance of a varying-coefficient model (VCM), a flexible and popular extension of the linear regression model \citep{guhaniyogi2023distributed, guhaniyogi2020large}. Under VCM specification for functional predictors equation \eqref{additive_reg} simplifies:
\begin{align}\label{additive_VCM}
o_{i,p}(\bv)=f_p^{N}(\widehat{\bu}_{i,p}) + g_{i,p}(\bv)\beta_{p,g}(\bv)+ w_{i,p}(\bv)\beta_{p,w}(\bv)+\epsilon_{i,p}(\bv).
\end{align}
We assign Gaussian process priors independently to estimate $\beta_{p,g}(\cdot)$, $\beta_{p,w}(\cdot)$, and $f_p^{N}(\cdot)$, each with a mean of $0$ and exponential covariance kernels $\kappa_{p,g}$, $\kappa_{p,w}$, and $\kappa_{p,N}$ respectively, defined as: 
\begin{equation}
\label{eq:covariance_kernel}
\begin{gathered}
\kappa_{p,g}(\bv,\bv')=\sigma_{p,g}^2\exp(-\theta_{p,g}||\bv-\bv'||), \qquad
\kappa_{p,w}(\bv,\bv')=\sigma_{p,w}^2\exp(-\theta_{p,w}||\bv-\bv'||),\\
\kappa_{p,N}(\widehat{\bu}_{i,p},\widehat{\bu}_{i',p})=\sigma_{p,N}^2\exp(-\theta_{p,N}||\widehat{\bu}_{i,p}-\widehat{\bu}_{i',p}||),
\end{gathered}   
\end{equation}
where $||\cdot||$ denotes the Euclidean distance between two vectors, and $\sigma_{p,g}^2$, $\sigma_{p,w}^2$, and $\sigma_{p,N}^2$ are variance parameters, while $\theta_{p,g}$, $\theta_{p,w}$, and $\theta_{p,N}$ are length-scale parameters. The length-scale parameters control the smoothness of the regression functions. Note, due to the alignment of latent effects $\widehat{\bu}_{i,p}$ to a common orientation as described in Stage 1, distance based kernels are appropriate for modeling the network effects. 

The use of GP priors for the mean function allows for flexible modeling of non-linear relationships between the outcome and the covariates. Furthermore, it enables information sharing between ROIs within a subnetwork and between subjects, facilitating accurate inference with uncertainties for the regression function and for prediction of the response image. The exponential kernel for GP is a specific case of the Matérn class of correlation functions with smoothness parameter $1/2$, yielding spatial surfaces that are continuous but not differentiable a priori \citep{stein1999interpolation}. While it is technically possible to use other Matérn kernels with higher smoothness parameters to obtain smoother functions, the exponential kernel is chosen due to its popularity in spatial analysis. Importantly, methods that perform well with the exponential kernel generally also perform well with Matérn kernels featuring higher smoothness parameters \citep{stein1999interpolation}. The second stage of AMO-GP can be parallelized across different subnetworks, yielding significant computational benefits. Theoretical properties, in particular the posterior consistency of AMO-GP have been studied in Supplementary Section~\ref{posterior_consis}.

\section{Posterior Computation: Parallel Computation Over Subnetworks}\label{posterior_comp}

We will address the posterior computation for the two stages of AMO-GP separately.

\noindent\underline{\textbf{Stage 1.}} In the empirical studies, we use the identity link for $\eta(\cdot)$ since the edge weights in the brain network are continuous in the ABCD data. The intrinsic dimensionality is fixed at $R=6$, consistent with prior studies showing minimal sensitivity in inference when $R$ is chosen around $5$ \citep{gutierrez2023bayesian}. A Gibbs Sampler is implemented (see Section~\ref{posterior-deriv} of the Supplementary materials) for 15,000 iterations, with the first 10,000 discarded as burn-in. Estimates for $\widehat{\bu}_{i,p}$ are derived from the remaining 5,000 iterations.

\noindent\underline{\textbf{Stage 2.}} The posterior computation in the second stage of AMO-GP involves elliptical slice sampling for length-scale parameters $\theta_{p,g}$, $\theta_{p,w}$, and $\theta_{p,N}$ and Metropolis–Hastings algorithm for the variance parameters $\tau_p^2$, $\sigma_{p,g}^2$, $\sigma_{p,w}^2$ and $\sigma_{p,N}^2$ to generate $15000$ MCMC samples, discarding the first $10000$ as burn-in. Details of the full conditional distributions are available in Section~\ref{posterior-deriv} of the Supplementary materials.

\noindent\underline{\textbf{Parallel computation and computation complexity.}} The computational complexity of AMO-GP is primarily driven by two factors: updating the varying coefficients $(\beta_{p,g}(\bv):\bv\in\mathcal{R}_p)^T$ and $(\beta_{p,w}(\bv):\bv\in\mathcal{R}_p)^T$ across all ROIs for each subnetwork, and updating
$(f_p^N(\widehat{\bu}_{1,p}),\ldots,f_p^N(\widehat{\bu}_{n,p}))^T$ for each subnetwork during each MCMC iteration. Since the method allows for parallel processing across different subnetworks, the complexity of updating the varying coefficients and node effects is approximately $\sim\max_{p=1,..,P} V_p^3$ and $\sim n^3$, respectively. 
To ensure feasible computation, we use a moderate sample size, with Section~\ref{conclusion} discussing future work to address challenges with larger datasets. All implementations are performed in the \texttt{R} programming language.

\section{Simulation Studies}\label{sim_studies}

\subsection{Simulated Data Generation}\label{simdatgen}

We consider $P=50$ subnetworks and a sample size of $n=100$ for both training and test datasets. Each subnetwork contains 12 regions of interest (ROIs), i.e., $V_p=12$, for $p = 1, \ldots, P$. The network predictor is constructed from $R=3$ latent components. For each subject $i=1,\ldots,n$ and subnetwork $p=1,\ldots,P$, we generate a network-level predictor based on latent variables $\tilde{\bu}_{i,p}$, along with two functional predictors $w_{i,p}(\bv)$ and $g_{i,p}(\bv)$ defined over spatial locations $\bv \in [0,10]^3$. The response image $o_{i,p}(\bv)$ is generated under four scenarios summarized in Table~\ref{tab:sim_scenarios}, designed to assess performance under both well-specified (Scenarios 1 and 3) and misspecified (Scenarios 2 and 4) settings. All experiments are repeated over 10 independent replicates. Details are provided in Supplementary Section~\ref{supp:simdatgen}.
\begin{table}[!t]
\centering
\small
\caption{\footnotesize Summary of simulation scenarios.}
\label{tab:sim_scenarios}
\begin{tabular}{c p{0.23\linewidth} p{0.6\linewidth}}
\toprule
\textbf{Scenario} & \textbf{Type} & \textbf{Data Generating Model Specification} \\
\midrule
1 & AMO-GP additive model 
& Additive AMO-GP model as in \eqref{additive_VCM} with Gaussian noise; $f_p^N$, $\beta_{p,g}(\bv)$, and $\beta_{p,w}(\bv)$ are independent GPs with exponential kernels with parameters $(\sigma_{p,N}^2, \theta_{p,N})=(3, 6)$; $(\sigma_{p,g}^2, \theta_{p,g})= (3, 8)$ and $(\sigma_{p,w}^2, \theta_{p,w}) = (2, 5)$ respectively. Error variance is $\tau_p^2 = 0.01$.\\
& \\
2 & Misspecified network 
& Replace network effect $f_p^{N}(\tilde{\bu}_{i,p})$ in Scenario 1 with  $\sum_{j=1}^{R} \sin(\tilde{u}_{i,p,j})$. The varying coefficients will be simulated similar to Scenario 1.\\
&\\
3 & Heavy-tailed noise 
& Same as Scenario 1, but replace Gaussian noises with noises $\epsilon_{i,p}(\bv)$'s simulated from Student's $t_3$ distribution with the scale parameter chosen as $0.01/3$, such that the noise variance becomes $0.01$, same as Scenario 1. \\
&\\
4 & Non-additive interaction
& Simulate from $o_{i,p}(\bv)=
f_p^{N}(\tilde{\bu}_{i,p})
+ g_{i,p}(\bv)\beta_{p,g}(\bv)
+ w_{i,p}(\bv)\beta_{p,w}(\bv)\, +\rho f_p^{N}(\tilde{\bu}_{i,p}) \, g_{i,p}(\bv)\beta_{p,g}(\bv)
+ \epsilon_{i,p}(\bv)$, where $\epsilon_{i,p}(\bv) \sim N(0,\tau_p^{*2})$, $\tau_p^{*2}=0.01$, and $\rho=0.3$ controls the strength of the interaction. The functions $f_p^N$, $\beta_{p,g}(\bv)$, and $\beta_{p,w}(\bv)$ are simulated with an identical strategy as in Scenario 1.\\
\bottomrule
\end{tabular}
\end{table}

\subsection{Competitors \& Metrics of Comparison}
We evaluate two groups of competing models. The first group, \emph{nested competitors}, uses subsets of functional and network predictors to predict the outcome image, allowing us to assess each predictor's contribution to the model. The second group, \emph{non-nested competitors}, includes all three predictors like AMO-GP but does not account for network topology, spatial associations in functional predictors, or their structural interconnections.

\subsubsection{Competing Nested Models} \label{nested}

We consider two nested competitors that use subsets of predictors within the AMO-GP framework. In both cases, Stage 1 is identical to AMO-GP (Section~\ref{mf1}), yielding estimated latent node effects $\widehat{\bu}_{i,p}$, while modifications are introduced in Stage 2. The \emph{Network} model uses only the estimated network features $\widehat{\bu}_{i,p}$ to predict the response image $o_{i,p}(\bv)$ via a nonlinear regression, whereas the \emph{Function \& Network} model uses one functional predictor $g_{i,p}(\bv)$ and the estimated network features $\widehat{\bu}_{i,p}$ in the Stage 2 regression.

\subsubsection{Non-nested Competitors}\label{compete}

As non-nested competitors, we consider Bayesian additive regression trees (BART) \citep{chipman1998bayesian}, a neural network (NN), and a deep kernel learning Gaussian process model (BIRD-GP) \citep{ma2023}, each using all predictors to model the outcome image without imposing the additive multi-object structure of AMO-GP. BART is implemented using the \texttt{R} package \texttt{BART}, which fits a sum-of-trees model to capture nonlinear effects. The neural network is implemented using the \texttt{neuralnet} \texttt{R} package with a feedforward architecture. For BIRD-GP, we use the publicly available \texttt{Python} implementation provided by the authors, which employs deep kernel learning within a Gaussian process framework.

\subsubsection{Metrics of Comparison}\label{metric}
To compare AMO-GP with nested models, we use the posterior predictive loss criterion (PPLC) \citep{gelfand1998model}. Let $\widehat{o}_{i,p}(\bv)$ be the predicted response and $\widehat{\sigma}_{i,p}^2(\bv)$ the estimated variance. We use $G(\mathcal{M})=\sum_{p=1}^{P}\sum_{i=1}^{n}\sum_{\bv \in \mathcal{R}_p}^{}\left ( o_{i,p}(\bv)-\widehat{o}_{i,p}(\bv) \right )^{2}$ and $P(\mathcal{M})=\sum_{p=1}^{P}\sum_{i=1}^{n}\sum_{\bv \in \mathcal{R}_p}^{}\widehat{\sigma^2}_{i,p}(\bv)$ to compute accuracy and complexity, respectively, with total loss $D(\mathcal{M})=G(\mathcal{M})+P(\mathcal{M})$. We report $G(\mathcal{M})$, $P(\mathcal{M})$, $D(\mathcal{M})$ on training data along with the coefficient of determination ($R^2$) computed on the training set. For out-of-sample evaluation, we use Mean Squared Prediction Error (MSPE) on a test set of $n^*=100$ samples and report the corresponding test $R^2$. Predictive uncertainty is evaluated via coverage and length of 95\% intervals, averaged over all ROIs. NN provides only $G(\mathcal{M})$ and MSPE,  along with train and test $R^2$, without model complexity or predictive intervals. In addition, we assess the accuracy of coefficient estimation for AMO-GP through a coefficient recovery study, with details and full results provided in Section~\ref{sup_coef_recovery} of the Supplementary Materials.

\subsection{Simulation Results}
Table~\ref{sim_res} shows that AMO-GP consistently outperforms the \emph{Network} and \emph{Function \& Network} models in both accuracy and complexity across scenarios 1--3 — expected given the true model uses all three predictors. In particular, AMO-GP achieves substantially lower values of $G(\mathcal{M})$, $P(\mathcal{M})$, and $D(\mathcal{M})$, indicating both improved predictive fit and reduced effective model complexity. While BART, NN and BIRD-GP account for non-linear effects, they still substantially underperform relative to AMO-GP, with NN faring worse than all others. Scenario 4 marginally favors BIRD-GP over AMO-GP, however they both provide similar performance in this scenario. These results highlight two points: the two-stage approach effectively captures network effects, and performance degrades in Scenario 3 and 4 due to misspecified noise and increased complexity, respectively.
\begin{table}[!t]
\centering
\caption{\footnotesize{Average training and test performance of AMO-GP and competing methods over 10 repetitions across all scenarios. For the training data, we report $G(\mathcal{M})$, $P(\mathcal{M})$, $D(\mathcal{M})=G(\mathcal{M})+P(\mathcal{M})$, and $R^2$. For the test data, we report MSPE and $R^2$. $P(\mathcal{M})$ and $D(\mathcal{M})$ are unavailable for Neural Network (NN) and reported as ``NA''. Nested 1 includes one functional and network effects, whereas Nested 2 includes only network effects. Boldface indicates the best-performing method in each row.}}
\label{sim_res}
\small
\begin{tabular}{lcccccc}
\toprule
Metric & AMO-GP & \makecell{Function \\ \& Network} & Network & BART & NN & BIRD-GP \\
\midrule

\multicolumn{7}{c}{\textbf{Scenario 1}} \\
\midrule
$G(\mathcal{M}) \times 10^{-3}$ & \textbf{0.54} & 32.85 & 88.28 & 127.08 & 146.34 & 44.43 \\
$P(\mathcal{M}) \times 10^{-3}$ & \textbf{0.07} & 3.65 & 6.87 & 2.30 & NA & 2.32 \\
$D(\mathcal{M}) \times 10^{-3}$ & \textbf{0.61} & 36.50 & 95.14 & 129.38 & NA & 46.75 \\
$R^2$ (Train)    & \textbf{1.00} & 0.78 & 0.41 & 0.16 & 0.03 & 0.73 \\
MSPE (Test)      & \textbf{3.06} & 4.81 & 5.73 & 6.01 & 8.25 & 5.36 \\
$R^2$ (Test)     & \textbf{0.60} & 0.38 & -0.01 & -0.13 & -0.02 & 0.41 \\

\midrule
\multicolumn{7}{c}{\textbf{Scenario 2}} \\
\midrule
$G(\mathcal{M}) \times 10^{-3}$ & \textbf{0.54} & 106.75 & 280.85 & 341.43 & 380.90 & 132.84 \\
$P(\mathcal{M}) \times 10^{-3}$ & \textbf{0.07} & 9.53 & 15.75 & 4.55 & NA & 10.48 \\
$D(\mathcal{M}) \times 10^{-3}$ & \textbf{0.61} & 116.28 & 296.60 & 345.98 & NA & 143.32 \\
$R^2$ (Train)    & \textbf{1.00} & 0.75 & 0.38 & 0.16 & 0.09 & 0.64 \\
MSPE (Test)      & \textbf{1.34} & 3.56 & 6.32 & 6.57 & 8.59 & 6.31 \\
$R^2$ (Test)     & \textbf{0.77} & 0.33 & -0.00 & -0.23 & 0.00 & 0.34 \\

\midrule
\multicolumn{7}{c}{\textbf{Scenario 3}} \\
\midrule
$G(\mathcal{M}) \times 10^{-3}$ & \textbf{0.52} & 280.22 & 445.32 & 610.95 & 680.43 & 314.88 \\
$P(\mathcal{M}) \times 10^{-3}$ & \textbf{0.07} & 24.32 & 29.71 & 7.61 & NA & 29.30 \\
$D(\mathcal{M}) \times 10^{-3}$ & \textbf{0.59} & 304.54 & 475.03 & 618.56 & NA & 344.18 \\
$R^2$ (Train)    & \textbf{1.00} & 0.62 & 0.33 & 0.12 & 0.07 & 0.56 \\
MSPE (Test)      & \textbf{2.93} & 4.59 & 4.93 & 5.02 & 6.72 & 4.60 \\
$R^2$ (Test)     & \textbf{0.62} & 0.31 & -0.00 & -0.12 & -0.01 & 0.46 \\

\midrule
\multicolumn{7}{c}{\textbf{Scenario 4}} \\
\midrule
$G(\mathcal{M}) \times 10^{-3}$ & 44.18 & 46.18 & 104.18 & 136.90 & 165.67 & \textbf{39.64} \\
$P(\mathcal{M}) \times 10^{-3}$ & 5.16 & 4.96 & 7.74 & \textbf{2.98} & NA & 3.76 \\
$D(\mathcal{M}) \times 10^{-3}$ & 49.34 & 51.14 & 111.92 & 139.88 & NA & \textbf{43.40} \\
$R^2$ (Train)    & 0.92 & 0.72 & 0.37 & 0.18 & 0.01 & \textbf{0.95} \\
MSPE (Test)      & 4.02 & 6.75 & 9.31 & 9.75 & 10.99 & \textbf{3.75} \\
$R^2$ (Test)     & 0.55 & 0.34 & -0.01 & -0.17 & -0.01 & \textbf{0.56} \\

\bottomrule
\end{tabular}
\end{table}

Figure~\ref{covlen} summarizes predictive uncertainty via 95\% interval coverage and length, evaluated relative to the nominal 95\% level. All models exhibit under-coverage on training data, with this effect most pronounced for AMO-GP, which produces very narrow intervals reflecting high model flexibility. This is likely due to optimizing the weakly identifiable kernel parameters in GP, as argued by earlier articles encountering similar phenomena \citep{van2017convolutional}. On test data, AMO-GP achieves the highest coverage across scenarios ($\geq$ 90\%), while BART severely under-covers and other methods provide moderate but sub-nominal coverage.
\begin{figure}[!htp]
\centering
\includegraphics[width=0.9\textwidth]{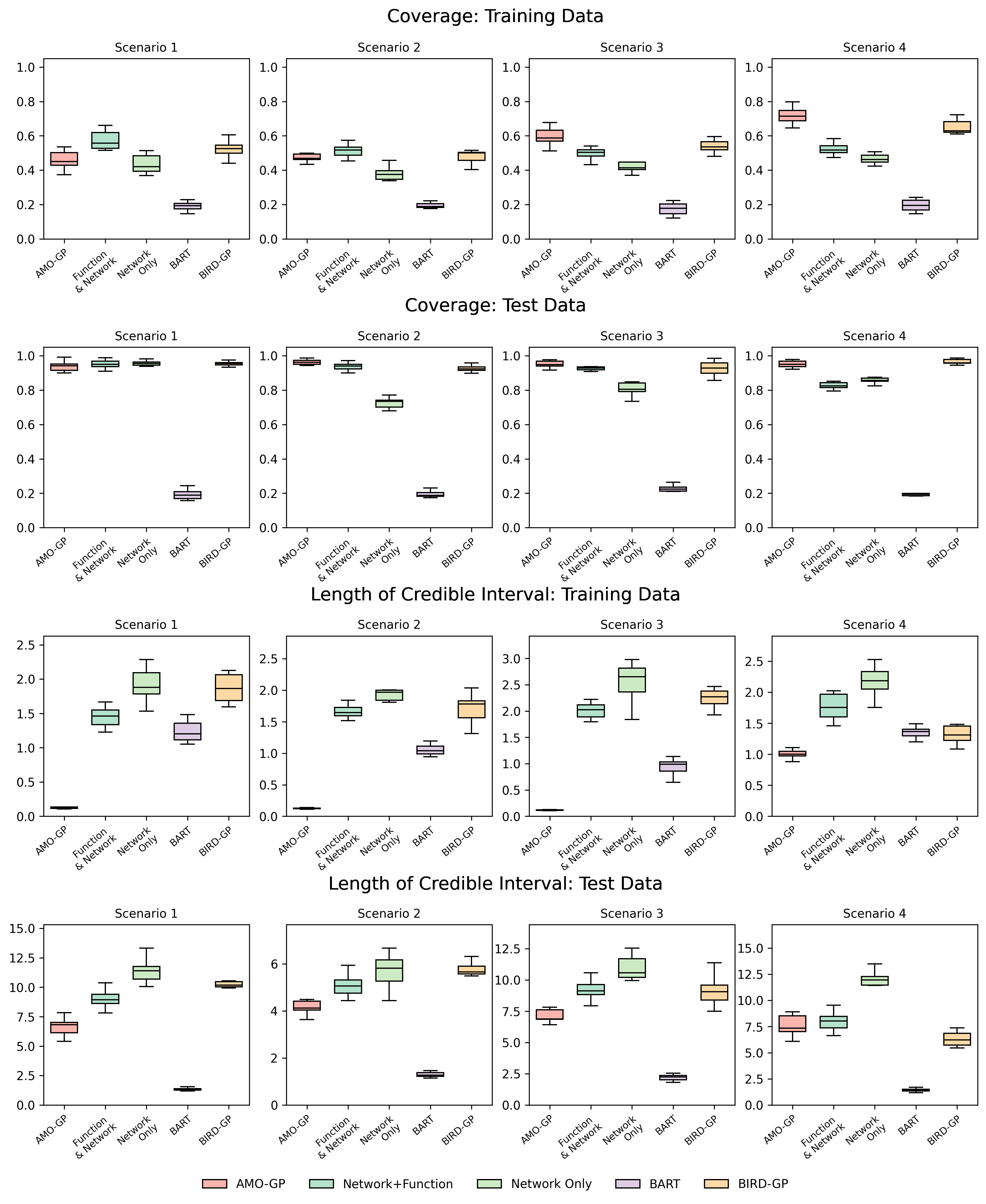}
\caption{\footnotesize{{Box plots of 95\% predictive interval coverage and length across all scenarios over 10 independent repetitions for AMO-GP and different competing models. Coverage and length are not available for NN.}}}
\label{covlen}
\end{figure}

In Scenario 3, most models (except BART) produce wider PIs than in Scenarios 1 and 2, reflecting sensitivity to error misspecification. BART yields the narrowest intervals across scenarios but suffers from severe under-coverage, indicating poor calibration. Incorporating both functional and network predictors improves both predictive accuracy and interval calibration compared to using network information alone. Additional coefficient recovery results (Supplementary Section~\ref{sup_coef_recovery}) show decreasing accuracy under misspecification, especially with heavy-tailed noise and interactions.

\section{ABCD Data Analysis}
\label{multi_modal_data_analysis}

As detailed in Section~\ref{data application}, the scientific goal is to predict task-based activation maps using structural MRI (cortical thickness and sulcal depth) and resting-state connectivity network from fMRI. 
The outcome variable from the t-fMRI data, and the explanatory variables, representing cortical thickness (CT) and sulcal depth (SD), are appropriately scaled and normalized to facilitate the analysis and generate results for inference. While AMO-GP regression flexibly models relationships across modalities, it is computationally demanding for large samples (see Section \ref{posterior_comp}). Thus, we analyzed neuroimaging data from 80 randomly selected ABCD study subjects, split equally into training and test sets (Section~\ref{data application}). 
Supplementary Section~\ref{sup-subsec-ABCDtraceplots} shows trace-plots for the length-scale and variance parameters, confirming satisfactory convergence. Model performance is compared to nested and non-nested alternatives described in Sections \ref{nested} and \ref{compete}. Nested models include: (i) connectivity network only predictor, (ii) connectivity network and SD predictors, and (iii) connectivity network and CT predictors. Non-nested competitors are BART, NN, and BIRD-GP. To assess the robustness of AMO-GP, we conducted an additional analysis by generating 10 random subsamples of size equal to the original training set (i.e., $40$) and fitting AMO-GP to each. We computed the in-sample $R^2$ across all subsamples and subnetworks, with results summarized via boxplots in Supplementary Figure~\ref{r2_subsample_fig}. The detailed results provided in Supplementary Section~\ref{sup-subsec-insampleR2} gives insight into the stability of model performance across different splits.

\subsection{Analysis \& Results}\label{data_ana_res}

\begin{table}
\centering
\caption{\footnotesize The model fitting statistics, including MSE, MSPE, out-of-sample coverage, and the length of 95\% predictive intervals for t-FMRI. Definitions of these metrics can be found in Section \ref{metric}. Since coverage and length metrics are not available for the NN model, the corresponding cells are marked as ``NA.''}
\label{data_metric_tab}
\resizebox{\textwidth}{!}{%
\begin{tabular}{@{}cccccccc@{}}
\toprule
\multirow{2}{*}{Metric}
& Proposed Model
& \multicolumn{3}{c}{Nested Competitors}
& \multicolumn{3}{c}{Non-Nested Competitors} \\
\cmidrule(lr){2-2} \cmidrule(lr){3-5} \cmidrule(lr){6-8}
& $\mathrm{AMO\text{-}GP}$
& $\mathrm{CT}$ \& $\mathrm{Network}$
& $\mathrm{SD}$ \& $\mathrm{Network}$
& $\mathrm{Network}$
& $\mathrm{BART}$
& $\mathrm{NN}$
& $\mathrm{BIRD\text{-}GP}$\\
\midrule
MSE               & 0.37    & 0.40    & 0.41    & 0.42    & 0.37    & 0.66 & 0.37\\
MSPE              & 1.50    & 1.55    & 1.55    & 1.56    & 1.90    & 6.61 & 3.93\\
Coverage (out-of-sample)       & 88.34\% & 83.78\% & 84.07\% & 83.34\% & 50.38\% & NA & 44.24\%\\
Length of CI (out-of-sample)     & 2.03    & 1.90    & 1.90    & 1.85    & 1.18    & NA  &0.50\\
\bottomrule
\end{tabular}%
}
\end{table}

Table \ref{data_metric_tab} compares AMO-GP with nested and non-nested competitors on the ABCD data. AMO-GP, BART, and BIRD-GP achieve the lowest in-sample MSE, followed by the nested competitors and NN. However, BIRD-GP exhibits substantially worse out-of-sample performance, as reflected by its higher MSPE. AMO-GP outperforms both BART and BIRD-GP out-of-sample and shows modest improvement over the nested competitors, which perform comparably overall, suggesting limited additional contribution from CT and SD after accounting for the network predictor. Subnetwork-level comparisons (Table~\ref{networkmse} of Supplementary Section~\ref{sup-subsec-ABCD-MSE}) show modest MSPE improvements from including structural predictors in the auditory, default, retrosplenial temporal, and sensorimotor hand networks. In particular, the default and retrosplenial temporal networks have been implicated in working memory \citep{vann2009, owen2005}. AMO-GP also achieves the highest out-of-sample coverage ($88.34\%$), indicating well-calibrated predictive uncertainty, at the cost of wider but more plausible predictive intervals. In contrast, BIRD-GP exhibits substantially lower coverage ($44.24\%$) along with much shorter predictive intervals, suggesting underestimation of predictive uncertainty and overconfident predictions. BART similarly shows low coverage ($50.38\%$), though with moderately wider intervals than BIRD-GP. These results highlight the advantage of the AMO-GP framework in providing more reliable uncertainty quantification. 

Across subnetworks, $R^2$ (in-sample) values range from approximately $0.32$ (none) to $0.87$ (retrosplenial temporal), with moderate variability across subsamples as indicated by standard deviations between $0.04$ and $0.11$, suggesting stable and consistently performance across subsamples. A detailed analysis of $R^2$ is provided in Section~\ref{sup-subsec-insampleR2} of Supplementary Materials. 
\begin{figure}
 \includegraphics[width=.5\textwidth, trim=50 0 50 0, clip]{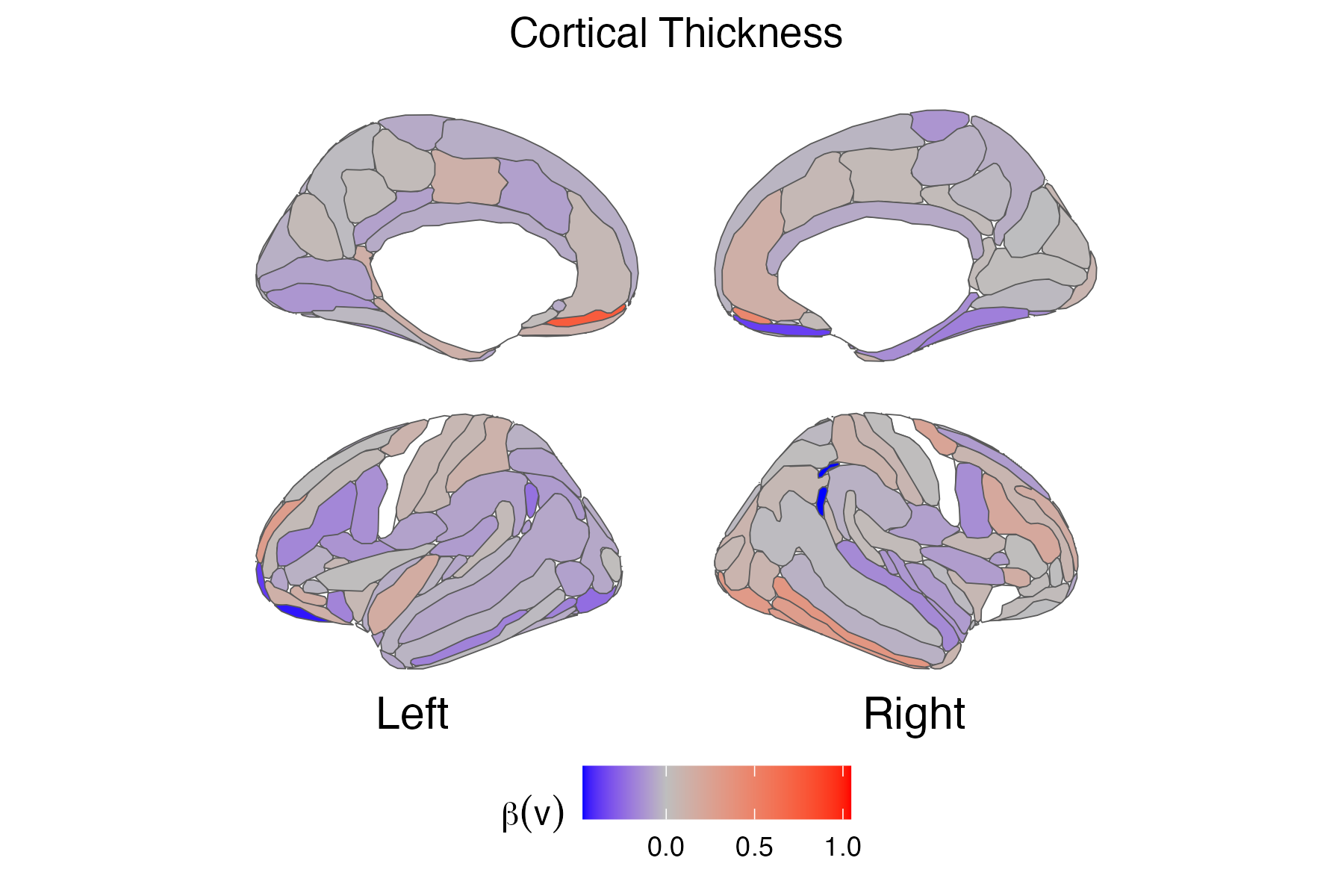}
 \includegraphics[width=.5\textwidth, trim=50 0 50 0, clip]{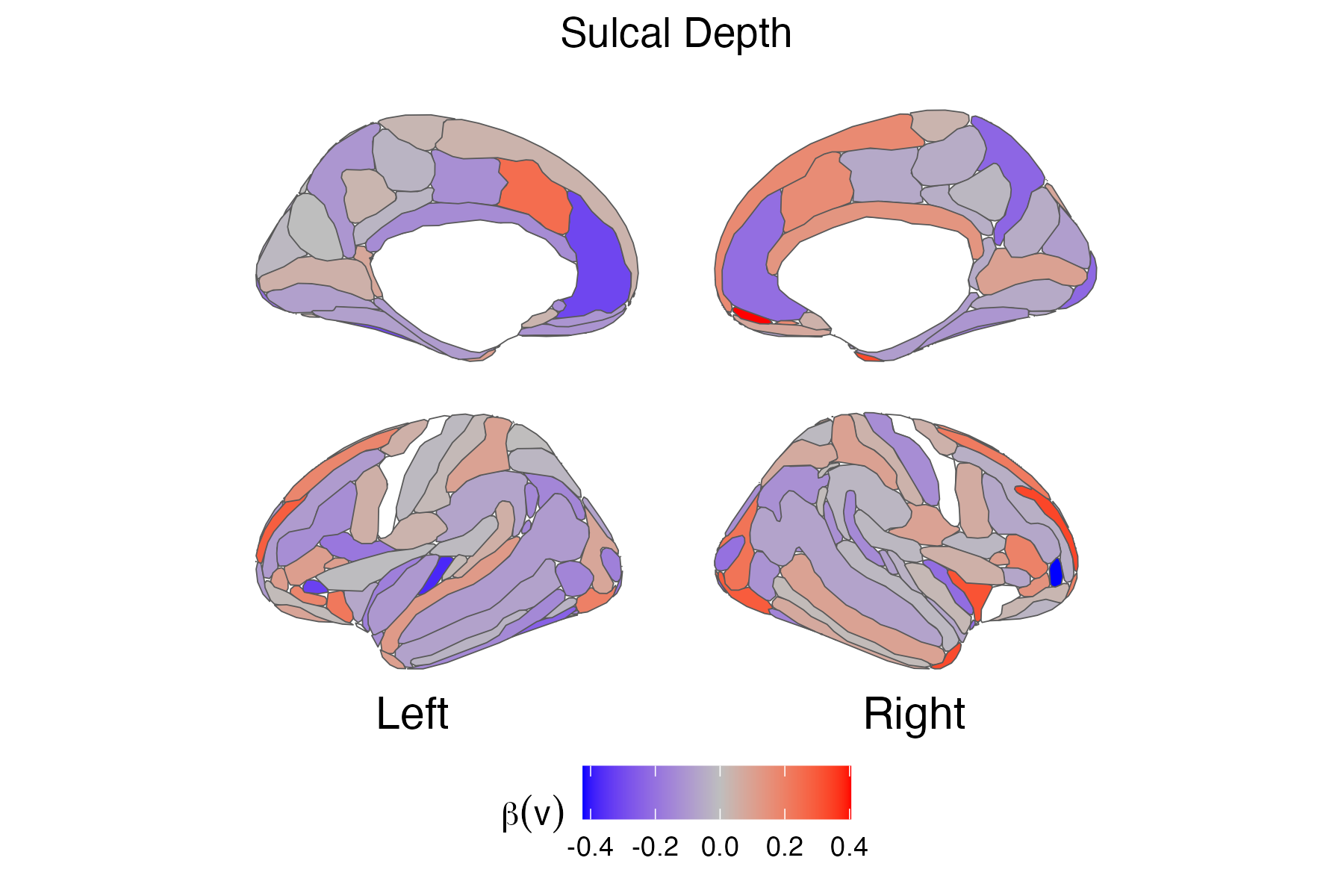}
\caption{\footnotesize{{Varying coefficient functions corresponding to the cortical thickness and sulcal depth s-MRI predictors. In each panel, the top row shows the medial view of the brain while the bottom row shows the lateral view of the brain. Similarly, the left column shows the left hemisphere while the right column shows the right hemisphere. Each ROI in the Destrieux parcellation is indicated by solid lines with shading reflecting the value of the varying coefficient function in that region.}}}
\label{var_cov}
\end{figure}

Figure \ref{var_cov} displays the varying coefficient functions corresponding to the cortical thickness and sulcal depth s-MRI predictors. In each panel, the top row shows the medial view of the brain and the bottom row the lateral view; the left and right columns correspond to the left and right hemispheres, respectively. Each Destrieux ROI is indicated by solid lines with shading reflecting the value of the varying coefficient function in that region which can be interpreted as the conditional effect of the predictor on the t-fMRI outcome (both outcome and predictors are scaled to have a standard deviation equal to one). Cortical thickness shows a generally negative association with task-based response to a working memory task while controlling for rs-fMRI and sulcal depth, particularly in the parietal region, which could reflect prior evidence that during development working memory is associated with a thinner parietal cortex \citep{krogsrud2021}. The exception is the right temporal lobe in which task activation is positively correlated with cortical thickness, an observation which is not well triangulated within the existing literature on working memory. Associations between sulcal depth and task-activation are more mixed: positive associations in the frontal and insula lobes and negative associations in the parietal, temporal, and occipital lobes. The limited work examining sulcal depth as a biomarker of working memory in the prefrontal cortex suggest similarly mixed findings \citep{yao2023}. The varying coefficient functions in Figure \ref{var_cov} highlight new inferential possibilities enabled by AMO-GP.

\section{Conclusion and Future Work}\label{conclusion}

This article introduces AMO-GP, a Bayesian framework for predictive inference of spatially-indexed functional outcomes using both functional and network image predictors. Additive nonlinear effects are modeled with Gaussian process priors on spatial coefficients for functional predictors and node effects for network predictors. AMO-GP enables uncertainty-quantified prediction of brain activation maps from t-fMRI based on cortical thickness, sulcal depth (s-MRI), and brain connectivity networks (rs-fMRI), establishing one of the first image-on-image regression models to incorporate both functional and network predictors.
In the ABCD data analysis, AMO-GP assessed the relative contributions of network and structural images in predicting task-based responses and provided interpretable conditional effects. Two limitations are noted: first, a limited sample size due to computational constraints necessitates cautious interpretation, as is common in neuroimaging studies. Second, the selection and alignment of brain atlases are critical; mismatches—such as structural ROIs spanning multiple subnetworks—may affect prediction accuracy, although inference and interpretation remain clear even when ideal nesting is not achieved. With further resources and access to the raw imaging data from the ABCD study, it would be possible to reprocess the imaging data to a single atlas more suited for joint analysis of structural and functional images, e.g. \citep{glasser}, and repeat the analysis, although this lies beyond the scope of this work. While the GP prior offers modeling flexibility, it has computational challenges with large datasets. A promising next step is to develop a Bayesian framework for efficient, distributed GP computation \citep{guhaniyogi2020large, guhaniyogi2023distributed}. 

\section*{Acknowledgments}
All computations were done using the \texttt{R} programming language on the Department of Statistics Arseven Computing Cluster at Texas A\&M University.

\subsection*{ABCD Data Acknowledgment}

Data used here were obtained from the Adolescent Brain Cognitive Development\textsuperscript{\texttrademark} (ABCD) Study (\url{https://abcdstudy.org}), held in the NIMH Data Archive (NDA). This is a multisite, longitudinal study designed to recruit more than 10,000 children age 9--10 and follow them over 10 years into early adulthood. The ABCD Study is supported by the National Institutes of Health and additional federal partners under award numbers U01DA041048, U01DA050989, U01DA051016, U01DA041022, U01DA051018, U01DA051037, U01DA050987, U01DA041174, U01DA041106, U01DA041117, U01DA041028, U01DA041134, U01DA050988, U01DA051039, U01DA041156, U01DA041025, U01DA041120, U01DA051038, U01DA041148, U01DA041093, U01DA041089, U24DA041123, U24DA041147. A full list of supporters is available at \url{https://abcdstudy.org/federal-partners.html}. A listing of participating sites and a complete listing of the study investigators can be found at \url{https://abcdstudy.org/consortium_members/}. 
ABCD consortium investigators designed and implemented the study and/or provided data but did not necessarily participate in the analysis or writing of this report. This manuscript reflects the views of the authors and may not reflect the opinions or views of the NIH or ABCD consortium investigators.

\section*{Supplementary Materials}
Supplementary material provides theoretical results on posterior consistency (Section~\ref{posterior_consis}), computation details (Section~\ref{posterior-deriv}), additional details and results on simulations (Section~\ref{supp:simdatgen}), and additional empirical results on real data analysis (Section~\ref{additional-data-study}). An \texttt{R} implementation of AMO-GP is available at \url{https://github.com/pritamdey/AMOGP}.

\section*{Data Availability}

The ABCD data repository grows and changes over time. The ABCD data used in this report came from \url{https://doi.org/10.15154/8873-zj65}.

\setstretch{1}
\putbib[bibliography_paper]

\end{bibunit}

\clearpage


\setcounter{section}{0}
\renewcommand{\thesection}{\Alph{section}}
\renewcommand{\thesubsection}{\thesection.\arabic{subsection}}
\renewcommand{\thesubsubsection}{\thesubsection.\arabic{subsubsection}}

\setcounter{figure}{0}
\setcounter{table}{0}
\setcounter{equation}{0}
\renewcommand{\thefigure}{S\arabic{figure}}
\renewcommand{\thetable}{S\arabic{table}}
\renewcommand{\theequation}{S\arabic{equation}}

\begin{bibunit}[plainnat]

\begin{center}
  {\LARGE\bfseries Supplementary Materials for\\
  \ManuscriptTitle\par}
  \vspace{0.2in}
\end{center}

\begin{abstract}
\noindent
This supplementary file consists of four sections. Section \ref{posterior_consis} introduces the notation and model setup and provides the statement and proof of the posterior consistency result in Theorem \ref{theorem1}. Section \ref{posterior-deriv} presents the full conditional distributions used to implement the Markov Chain Monte Carlo (MCMC) algorithm for model estimation. Section~\ref{supp:simdatgen} provides the detailed simulation settings for the simulation studies in Section~\ref{sim_studies} of the main manuscript. Finally, Section~\ref{additional-data-study} provides additional results from the ABCD data analysis discussed in Section~\ref{multi_modal_data_analysis} of the main text. 
\end{abstract}

\setstretch{1.4}

\section{Posterior Consistency of AMO-GP}\label{posterior_consis}

In this section, we prove the posterior consistency for the additive semi-parametric regression framework represented by equation~\eqref{additive_VCM} of the main article, assuming a single subject with only one network having $n$ ROIs such that the $D$-dimensional ROI co-ordinates take values in a compact set $\mathcal{T}\subset\mathbb{R}^D$. Without loss of generality, we assume $\mathcal{T}$ to be $[0,1]^D$. Since model represented by equation~\eqref{additive_VCM} of the main article is fitted in parallel for different subnetworks, we demonstrate posterior consistency for a specific subnetwork and omit the subscript $p$, denoting the functional covariates and varying coefficients as $g(\bv)$, $w(\bv)$, $\beta_g(\bv)$ and $\beta_w(\bv)$, respectively. Additionally, since we showcase our result for a single subject, we set $f_p^{N}=0$. Hence, $f(\bv)=g(\bv)\beta_g(\bv)+w(\bv)\beta_w(\bv)$ denotes the fitted regression function such that for $\bv_1,\hdots,\bv_n\in \mathcal{T}$ we have $O_i=f(\bv_i)+\epsilon_i$. Let us denote the sequences $O_1,\hdots,O_n$ and $f(\bv_1),\hdots,f(\bv_n)$ as $O_{1:n}$ and $f_n$ respectively. It is easy to check that $\bO_{1:n}|f \sim N(f_n,\tau^2\bI_n)$. For any given regression function $f$, we denote the conditional distribution and density of $\bO_{1:n}$ as $P_{n,f}$ and $p_{n,f}$ respectively.

Corresponding to the metric space $(\mathcal{T},||\cdot||)$, with $||\cdot||$ being the Euclidean norm, we denote by $C(\mathcal{T})$ the space of continuous functions on $\mathcal{T}$. Since $\mathcal{T}$ is compact with respect to the Euclidean norm, any continuous function $h$ on $\mathcal{T}$ is also bounded. We denote the supremum norm of any bounded function $h$ on $T$ by $ \left\| h\right\|_{\infty }=\text{sup}_{\bv \in \mathcal{T}}\left| h\left ( \bv \right )\right|$. For $\bv_1,\hdots,\bv_n\in \mathcal{T}$ and a function $h:\mathcal{T}\rightarrow \mathbb{R}$, the empirical norm $\left\| h\right\|_n$ is defined as
$\left\| h\right\|_n=\sqrt{\frac{1}{n}\sum_{i=1}^{n}h^{2}(\bv_i)}$. Further, let $\Pi$ denote the induced non-stationary mean-zero Gaussian process prior on the fitted regression function $f$ by assigning respective GP priors on varying coefficients as in equation~\eqref{eq:covariance_kernel} of the main article, and let $\left ( \mathbb{B},\left\| \cdot \right\|_{\mathbb{B}} \right )$ and $\left ( \mathbb{H},\left\| \cdot\right\|_{\mathbb{H}} \right )$ denote the respective separable Banach space and reproducing kernel Hilbert space (RKHS) corresponding to $\Pi$.

We will consider the following assumptions for showing posterior consistency.

\begin{enumerate}[label=(A.\arabic*), leftmargin=3em]
    \item \textbf{Fixed design.} 
    The co-ordinates $\bv_1,\hdots,\bv_n\in \mathcal{T}$ have been set by the experimenter.
    
    \item \textbf{Continuous functional covariates.} 
    The functional covariates $g(\cdot)$ and $w(\cdot)$ are non-zero and continuous on $\mathcal{T}$. 
    This implies that the support of the GP prior $\Pi$ is included in the space $C(\mathcal{T})$.

    \item \textbf{True regression model.} 
    The true regression function $f_0$ is included in $C(\mathcal{T})$.

    \item \textbf{Prior constraint.} 
    For any $\varepsilon > 0$, $\exists$ $n\left ( \varepsilon  \right )\geq 1/\varepsilon ^{2}$ such that 
    $\forall$ $n\geq n\left ( \varepsilon  \right )$ the following condition holds:
    \begin{align}
        \label{postcond}
        \Pi\left ( f:\left\| f\right\|_\infty <\varepsilon  \right )
        \geq 
        \Pi\left ( f:\left\| f-f_0\right\|_\infty <\varepsilon  \right )
        \geq 
        e^{-n\varepsilon ^{2}}.
    \end{align}
\end{enumerate}

Assumption (A.2) ensures that the GP $\Pi$ is non-degenerate and has a continuous bounded sample path. Assumption (A.3) ensures that the true regression function $f_0$ is contained in the closure of the RKHS $\mathbb{H}$ within $C(\mathcal{T})$. As shown in Lemma 5.1 of \citep{van2008reproducing}, the closure of $\mathbb{H}$ is the support of the prior $\Pi$ and in our case is the full space $C(\mathcal{T})$. (A.4) will be required to satisfy certain conditions while proving Theorem \ref{theorem1} for posterior consistency as stated below:
\begin{theorem}\label{theorem1}
Suppose the assumptions (A.1)-(A.4) hold and $P_{n,f_0}$ denotes the true data generating distribution. Then for every $\varepsilon>0$, 
\begin{align}\label{postthm}
P_{n,f_0}\left [ \Pi\left ( f:\left\| f-f_0\right\|_n>\varepsilon|\bO_{1:n} \right ) \right ]\to 0 \text{ as }n\to \infty.
\end{align}  
\end{theorem}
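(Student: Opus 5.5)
This is posterior consistency in the empirical norm for Gaussian regression with fixed design and a Gaussian process prior. The natural route is the general theorem of Ghosal and van der Vaart for independent, non-identically distributed observations, in the form specialized to GP regression by van der Vaart and van Zanten. I would work with $\tau^2$ treated as known, or fixed, to keep the likelihood Gaussian with identity covariance. Consistency then follows from two ingredients.

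\textbf{(i) Prior mass condition.} Kullback--Leibler neighbourhoods of $P_{n,f_0}$ must receive prior mass at least $e^{-cn\varepsilon^2}$.

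\textbf{(ii) Tests.} There must exist tests for $H_0: f=f_0$ against $\{f:\|f-f_0\|_n>\varepsilon\}$ whose type I and type II errors decay exponentially in $n$, on sieves $\mathcal{F}_n$ carrying all but $e^{-Cn\varepsilon^2}$ of the prior mass.

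\textbf{Step 1: the prior mass condition.} For the Gaussian likelihood,
\[
K(P_{n,f_0},P_{n,f})=\frac{n}{2\tau^2}\|f-f_0\|_n^2 ,
\]
and the second moment of the log-likelihood ratio equals $\frac{n}{\tau^2}\|f-f_0\|_n^2$. Both are bounded by multiples of $n\|f-f_0\|_\infty^2$. Hence the sup-norm ball $\{\|f-f_0\|_\infty<\varepsilon\}$ is contained in the averaged KL neighbourhood of radius $\asymp\varepsilon^2$. Assumption (A.4) then gives this ball prior mass at least $e^{-n\varepsilon^2}$ for $n\ge n(\varepsilon)$. This is exactly why (A.4) is stated.

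\textbf{Step 2: the sieve.} I would use Borell's inequality in the form of Theorem 2.1 of van der Vaart and van Zanten. Set
\[
\mathcal{F}_n=\varepsilon B_1+M_n\mathbb{H}_1 ,
\]
where $B_1$ is the unit ball of $C(\mathcal{T})$ and $\mathbb{H}_1$ is the unit RKHS ball. Choose $M_n=-2\Phi^{-1}(e^{-Cn\varepsilon^2})\asymp\sqrt{n}\,\varepsilon$. Borell's inequality then gives
\[
\Pi(\mathcal{F}_n^c)\le 1-\Phi\bigl(\Phi^{-1}(\Pi(\|f\|_\infty<\varepsilon))+M_n\bigr).
\]
The first inequality in (A.4) lower-bounds the small-ball probability, so this quantity is at most $e^{-Cn\varepsilon^2}$.

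Here (A.2) is used to ensure that $\Pi$ is a Borel Gaussian measure on $C(\mathcal{T})$. This holds because $f=g\beta_g+w\beta_w$ is a sum of continuous multiplications of independent continuous GPs. Its RKHS is $g\mathbb{H}_g+w\mathbb{H}_w$, although the argument does not need this explicitly.

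\textbf{Step 3: the tests.} Because the design is fixed and the noise is Gaussian, the likelihood ratio test of $f_0$ against a single $f_1$ with $\|f_1-f_0\|_n>\varepsilon$ has both errors at most $e^{-n\|f_1-f_0\|_n^2/(8\tau^2)}$. It also has small type II error uniformly over the $\|\cdot\|_n$-ball of radius $\|f_1-f_0\|_n/2$ around $f_1$.

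Covering $\mathcal{F}_n\cap\{\|f-f_0\|_n>\varepsilon\}$ by such balls and taking the maximum test gives type I error at most
\[
N\bigl(\varepsilon/2,\mathcal{F}_n,\|\cdot\|_n\bigr)\,e^{-n\varepsilon^2/(8\tau^2)} .
\]
Since $\|\cdot\|_n\le\|\cdot\|_\infty$, it suffices to bound the sup-norm entropy. Theorem 2.1 of van der Vaart and van Zanten gives
\[
\log N(2\varepsilon,\mathcal{F}_n,\|\cdot\|_\infty)\le \frac{M_n^2}{2}-\log\Pi(\|f\|_\infty<\varepsilon)\lesssim n\varepsilon^2 .
\]
The combination of Steps 1--3 with the usual denominator lower bound (the KL prior mass, via Chebyshev) shows that the posterior mass of $\{\|f-f_0\|_n>\varepsilon\}$ tends to zero in $P_{n,f_0}$-probability. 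Bounded convergence then gives the expectation statement \eqref{postthm}.

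\textbf{Main obstacle.} The hard part is matching the constants. The entropy bound is of order $n\varepsilon^2$ with some constant, while the testing exponent is $n\varepsilon^2/(8\tau^2)$, and the former must be smaller. I would handle this with the standard device of applying the argument at a smaller radius $\delta=\varepsilon/K$. Apply (A.4) at $\delta$ for the prior mass and the sieve, and test at separation $\varepsilon$. Then the entropy is $\lesssim n\delta^2$ while the testing exponent is $\asymp n\varepsilon^2$, which dominates once $K$ is large relative to $\tau$. The same choice also makes the sieve remainder $e^{-Cn\delta^2}$ beat the denominator bound $e^{-2n\delta^2}$, provided $C$ is chosen large.
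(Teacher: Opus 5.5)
Your proposal is correct and follows essentially the same route as the paper. Both arguments use a Borell-inequality sieve $\delta\mathbb{B}_1+K_n\mathbb{H}_1$ whose mass and sup-norm entropy are controlled by the small-ball part of (A.4), as in van der Vaart and van Zanten (2008), together with fixed-design Gaussian tests built from a cover of the sieve. They also share a lower bound on $\int p_{n,f}/p_{n,f_0}\,d\Pi(f)$ from the second inequality in (A.4), and a rescaling $\delta=\varepsilon/(10c)$ so that the testing exponent beats both the entropy and the evidence bound. The paper merely packages this through Proposition 11 and Lemmas 13--14 of van der Vaart and van Zanten (2011), with an exponential rather than Chebyshev bound on the evidence and peeling over shells in $\|f-f_0\|_n$.

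There is one small slip in your Step 3, affecting only constants that your rescaling already absorbs. The likelihood-ratio test with both errors at most $e^{-n\|f_1-f_0\|_n^2/(8\tau^2)}$ has type II error $1/2$ at $(f_0+f_1)/2$, and that point lies in the ball of radius $\|f_1-f_0\|_n/2$ around $f_1$. So you should cover with balls of radius $\xi\|f_1-f_0\|_n$ for some $\xi<1/2$, and use the type II bound $e^{-(1/2-\xi)^2n\|f_1-f_0\|_n^2/(2\tau^2)}$.
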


To prove Theorem \ref{theorem1}, we first need to develop some lemmata. Let $(V,d)$ be a normed space. Then for some $\Theta \subset V$, we denote its corresponding $\delta$-packing and $\delta$-covering numbers as $M(\delta,\Theta,d)$ and $N(\delta,\Theta,d)$ respectively. The following is a standard topological result to be applied in proving Lemma \ref{lemma1} stated after it.

\noindent\begin{result}\label{paco}
For any $\delta>0$ we have $M(2\delta,\Theta,d)\leq N(\delta,\Theta,d)\leq M(\delta,\Theta,d)$.
\end{result}

\noindent\begin{lemma}\label{lemma1}
For any $\delta>0$, suppose that assumption (A.4) is satisfied. Then for every $c>1$, $\delta>0$ and $n\geq n\left ( \delta  \right )$, $\exists$ a set $\Theta_{n,\delta,c}$ such that
\begin{gather}\label{setpr}
\Pi\left ( \Theta_{n,\delta ,c} \right ) \geq 1-e^{-2n\delta^{2}c^{2} } \\
\label{entpy}
M(6\sqrt{2}\delta, \Theta_{n,\delta ,c},\left\| \cdot\right\|_n ) \leq e^{12n\delta^{2}c^{2}} 
\end{gather}
\end{lemma}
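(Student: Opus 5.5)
This is the standard sieve construction for Gaussian process priors, as in van der Vaart and van Zanten (2008): Borell's inequality combined with the small-ball condition in (A.4).

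\textbf{Choice of sieve.} Let $\mathbb{H}_1$ be the unit ball of the RKHS $\mathbb{H}$ of $\Pi$, and let $\mathbb{B}_1$ be the unit sup-norm ball in $C(\mathcal{T})$. Write $\Phi$ for the standard normal distribution function. Define
\begin{align*}
\Theta_{n,\delta,c}=M_n\mathbb{H}_1+\delta\,\mathbb{B}_1, \qquad M_n=-2\Phi^{-1}\bigl(e^{-2n\delta^2c^2}\bigr).
\end{align*}

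\textbf{Prior mass \eqref{setpr}.} By (A.4) with $\varepsilon=\delta$ and $n\ge n(\delta)$, the centered small-ball probability satisfies $\Pi(\|f\|_\infty<\delta)\ge e^{-n\delta^2}$. Borell's inequality gives
\begin{align*}
\Pi(\Theta_{n,\delta,c}^c)\le 1-\Phi\bigl(\Phi^{-1}(\Pi(\|f\|_\infty<\delta))+M_n\bigr).
\end{align*}
Since $c>1$, we have $e^{-n\delta^2}\ge e^{-2n\delta^2c^2}$, so $\Phi^{-1}(e^{-n\delta^2})\ge -M_n/2$. Hence the argument of $\Phi$ is at least $M_n/2=-\Phi^{-1}(e^{-2n\delta^2c^2})$, and the complement has mass at most $e^{-2n\delta^2c^2}$. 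This is exactly \eqref{setpr}. The standard Gaussian tail bound gives $M_n\le 2\sqrt{2\cdot 2n\delta^2c^2}=4\sqrt{n}\,\delta c$ in the relevant range, and this bound is used below.

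\textbf{Entropy \eqref{entpy}.} Since $\|h\|_n\le\|h\|_\infty$, it suffices to bound packing numbers in sup-norm. By Result~\ref{paco}, $M(6\sqrt2\delta,\cdot)\le N(3\sqrt2\delta,\cdot)$, so it is enough to cover $\Theta_{n,\delta,c}$ by $e^{12n\delta^2c^2}$ sup-norm balls of radius at most $3\sqrt2\delta$. Take a maximal $2\delta$-separated set $h_1,\dots,h_N$ in $M_n\mathbb{H}_1$. The sup-norm balls of radius $\delta$ around the $h_j$ are disjoint. By the Cameron--Martin shift inequality $\Pi(h+\delta\mathbb{B}_1)\ge e^{-\|h\|_{\mathbb{H}}^2/2}\,\Pi(\delta\mathbb{B}_1)$, each ball has prior mass at least $e^{-M_n^2/2}e^{-n\delta^2}$. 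Summing over the disjoint balls gives
\begin{align*}
N\le e^{M_n^2/2+n\delta^2}.
\end{align*}
The $2\delta$-balls around the $h_j$ cover $M_n\mathbb{H}_1$, so the $3\delta$-balls cover $\Theta_{n,\delta,c}$, and $3\delta\le3\sqrt2\delta$. Plugging in $M_n^2\le 16n\delta^2c^2$ yields the exponent $8n\delta^2c^2+n\delta^2\le 9n\delta^2c^2\le 12n\delta^2c^2$. The factor $\sqrt2$ in the statement is slack to absorb looser constants.

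\textbf{Main obstacle.} The delicate step is the constant bookkeeping for $M_n$. The bound $-\Phi^{-1}(y)\le\sqrt{2\log(1/y)}$ requires $y<1/2$, which holds once $n\delta^2c^2$ is large. The condition $n\ge n(\delta)\ge1/\delta^2$ in (A.4) gives $n\delta^2\ge1$ and, together with $c>1$, $y=e^{-2n\delta^2c^2}<e^{-2}<1/2$, so the bound applies. The other point needing care is using (A.4) in its centered form; this is given by the first inequality in \eqref{postcond}.
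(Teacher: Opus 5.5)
Your proof is correct and follows the paper's argument: the same sieve, namely a sup-norm ball plus $K_{n,\delta,c}\mathbb{H}_1$ with $K_{n,\delta,c}=-2\Phi^{-1}(e^{-2n\delta^2c^2})$, then Borell's inequality for the prior mass, the Cameron--Martin shift bound on disjoint balls around a maximal separated set in $K_{n,\delta,c}\mathbb{H}_1$ for the entropy, and finally $\|\cdot\|_n\le\|\cdot\|_\infty$. The differences are minor. You take the ball radius $\delta$ rather than the paper's $\delta_1=\sqrt2\delta$, which even yields the stronger $M(6\delta,\Theta_{n,\delta,c},\|\cdot\|_n)\le e^{9n\delta^2c^2}$. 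Your explicit bound $M_n^2\le 16n\delta^2c^2$ is also the correct form of the paper's stated $K_{n,\delta,c}\le 20n\delta^2c^2$, which should read $K_{n,\delta,c}^2\le 20n\delta^2c^2$.
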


\begin{proof}

Let $Z$ be the Borel measurable mean-zero Gaussian random element in separable Banach space $\mathbb{B}$ with RKHS $\mathbb{H}$ corresponding to $\Pi$. In addition, we denote $\mathbb{B}_1$ and $\mathbb{H}_1$ to be unit balls in $\mathbb{B}$ and $\mathbb{H}$ respectively. For any $c>1$, $\delta>0$ and $n\geq n\left ( \delta  \right )$, set a positive constant 
\begin{align}\label{mn}
 K_{n,\delta,c}=-2\Phi^{-1}( e^{-2n\delta^{2}c^{2}})    
\end{align}
where $\Phi$ denotes the distribution function of the standard normal distribution. Suppose that we have
\begin{align}\label{thetdef}
 \Theta_{n,\delta,c}= \delta_1\mathbb{B}_1 + K_{n,\delta,c}\mathbb{H}_1
\end{align}
where $\delta_1=\sqrt{2}\delta$. We set $\eta_{\delta}$ as the solution to the equation
\begin{align}\label{eta}
\Phi (\eta_{\delta} )=P\left ( Z \in \delta_1\mathbb{B}_1 \right )=\Pi\left ( f:\left\| f\right\|_\infty <\delta_1  \right ).    
\end{align}
Then, based on assumption (A.4) it is apparent that
$$\Phi (\eta_{\delta} ) \geq e^{-2n\delta^{2}} \geq e^{-2n\delta^{2}c^{2}}.$$

Therefore, we have $\eta_{\delta} + K_{n,\delta,c} \geq -\Phi^{-1}( e^{-2n\delta^{2}c^{2}})$. A simple application of Borell's inequality (see Theorem 3.1 in \citep{borell1975brunn} or Theorem 5.1 in \citep{van2008reproducing}) shows that
\begin{align*}
  \Pi\left ( \Theta_{n,\delta ,c} \right ) \geq  \Phi(\eta_{\delta} + K_{n,\delta,c}) \geq1-e^{-2n\delta^{2}c^{2} }.  
\end{align*}

Thus, we have proved (\ref{setpr}).

Suppose $\gamma_1,\hdots,\gamma_m$ are a maximal collection of points in $K_{n,\delta,c}\mathbb{H}_1$ such that they are $2\delta_1$ separated with respect to the norm $\left\| \cdot\right\|_{\mathbb{B}}$. Then we have $m=M(2\delta_1,K_{n,\delta,c}\mathbb{H}_1,\left\| \cdot\right\|_{\mathbb{B}})$. One may observe that, for $j=1,\hdots,m$, the $\left\| \cdot\right\|_{\mathbb{B}}$-balls $\gamma_j + \delta_1\mathbb{B}_1$ of radius $\delta_1$ are mutually disjoint. Further, applying (4.16) of \citep{kuelbs1994gaussian} or Lemma 5.2 of \citep{van2008reproducing} and under assumption (A.4), we can show that
\begin{align}\label{ball1}
    P\left ( Z \in \gamma_j + \delta_1\mathbb{B}_1 \right ) \geq e^{-\tfrac{1}{2}\left\| \gamma_j \right\|_{\mathbb{H}}^{2}}\Pi\left ( f:\left\| f\right\|_\infty <\delta_1  \right ) \geq e^{-\tfrac{1}{2}K_{n,\delta,c}^{2}}e^{-2n\delta^{2}c^{2}}.
\end{align}
Further, note that
\begin{align}\label{ball2}
  1\geq P\left (Z \in \bigcup_{j=1}^{m}(\gamma_j + \delta_1\mathbb{B}_1) \right )=\sum_{j=1}^{m} P\left ( Z \in \gamma_j + \delta_1\mathbb{B}_1 \right ).  
\end{align}
Combining (\ref{ball1}) and (\ref{ball2}) and applying Result \ref{paco} we can show that
\begin{align}\label{paco1}
N(2\delta_1,K_{n,\delta,c}\mathbb{H}_1,\left\| \cdot\right\|_{\mathbb{B}}) \leq m \leq e^{\tfrac{1}{2}K_{n,\delta,c}^{2}}e^{2n\delta^{2}c^{2}}.
\end{align}
Since any point in $\Theta_{n,\delta ,c}$ is within $\delta_1$ distance of some point in $K_{n,\delta,c}\mathbb{H}_1$ corresponding to the norm  $\left\| \cdot\right\|_{\mathbb{B}}$, we can show using triangle inequality that any collection of points which is a $2\delta_1$-cover for $K_{n,\delta,c}\mathbb{H}_1$ is also a $3\delta_1$-cover for $\Theta_{n,\delta ,c}$. This implies that
\begin{align}\label{paco2}
N(3\delta_1,\Theta_{n,\delta ,c},\left\| \cdot\right\|_{\mathbb{B}}) \leq N(2\delta_1,K_{n,\delta,c}\mathbb{H}_1,\left\| \cdot\right\|_{\mathbb{B}})  \leq e^{\tfrac{1}{2}K_{n,\delta,c}^{2}}e^{2n\delta^{2}c^{2}}.  
\end{align}
As has been reasoned in the proof of Theorem 2.1 in \citep{van2008rates}, we can show that $K_{n,\delta,c}\leq 20n\delta^{2}c^{2}$. Applying this result and Result \ref{paco} to (\ref{paco2}) we get
\begin{align}\label{paco3}
 M(6\delta_1,\Theta_{n,\delta ,c},\left\| \cdot\right\|_{\mathbb{B}}) \leq N(3\delta_1,\Theta_{n,\delta ,c},\left\| \cdot\right\|_{\mathbb{B}})\leq e^{12n\delta^{2}c^{2}}.   
\end{align}
Since our Banach space is $\left ( C(T),\left\|\cdot \right\|_{\infty } \right )$ and $\left\|\cdot \right\|_{n } \leq \left\|\cdot \right\|_{\infty }$ we have proved (\ref{entpy}).
\end{proof}

We have already discussed that $\bO_{1:n}|f \sim N(f_n,\tau^2\bI_n)$. To prove posterior consistency since we consider $\tau$ to be given, henceforth without loss of generality, we assume $\tau=1$.
\begin{lemma}\label{lemma2}
For any   $\delta>0$, $n\geq n\left ( \delta  \right )$  and $c \geq 6\sqrt{2}/5$, define a set $\Theta_{n,\delta,c}$ as in Lemma \ref{lemma1}. Then, $\exists$ a test $\psi$ based on $O_{1:n}|f \sim N(f_n,\bI_n)$ such that for every integer $j\geq1$,
\begin{gather}
\label{test1}
    P_{n,f_0}(\psi)\leq 9 e^{-\tfrac{1}{2}n\delta^{2}c^{2}},\\
\label{test2}
\sup_{f\in\Theta_{n,\delta,c}:\left\|f-f_0 \right\|_{n}\geq 10j\delta c}    P_{n,f}(1-\psi)\leq  e^{-100nj^2\delta^{2}c^{2}/8}
\end{gather}
where $f_0$ is the true response function.
\end{lemma}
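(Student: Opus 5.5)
We will build $\psi$ by the standard construction for Gaussian regression under the empirical norm: combine a likelihood-ratio test for each point of a packing of $\Theta_{n,\delta,c}$ by taking their maximum.

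\textbf{Step 1 (single alternative).} Fix $f_1$ with $\|f_1-f_0\|_n\geq 10\delta c$. Let $\phi_{f_1}=\mathbf{1}\{\langle \bO_{1:n}-f_{0,n},\, f_{1,n}-f_{0,n}\rangle > \tfrac12 n\|f_1-f_0\|_n^2\}$, where $f_{k,n}=(f_k(\bv_1),\ldots,f_k(\bv_n))^T$. Under $P_{n,f_0}$ the statistic is $N(0,n\|f_1-f_0\|_n^2)$. The Gaussian tail bound then gives
\begin{align*}
P_{n,f_0}\phi_{f_1}\leq e^{-n\|f_1-f_0\|_n^2/8}.
\end{align*}
For any $f$ with $\|f-f_1\|_n\leq \|f_1-f_0\|_n/4$, the mean shift $\langle f_n-f_{0,n},f_{1,n}-f_{0,n}\rangle$ is at least $\tfrac34 n\|f_1-f_0\|_n^2$, by Cauchy--Schwarz on the perturbation $f-f_1$. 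Hence the margin is at least $\tfrac14 n\|f_1-f_0\|_n^2$, and
\begin{align*}
P_{n,f}(1-\phi_{f_1})\leq e^{-n\|f_1-f_0\|_n^2/32}.
\end{align*}
The constants need tuning: with $\|f_1-f_0\|_n\geq 10 j\delta c$, the stated bound $e^{-100nj^2\delta^2c^2/8}$ needs exponent $1/8$ on the alternative side. I would therefore use threshold $\tfrac12$ and shrink the neighbourhood radius so that the alternative side also gets $e^{-n\|f_1-f_0\|_n^2/8}$ up to slack. Alternatively, I would take the neighbourhood radius equal to the packing radius $6\sqrt2\delta$, which is at most $5\delta c$ because $c\geq 6\sqrt2/5$. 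This shows why that hypothesis on $c$ appears: the radius is at most half of $10\delta c$.

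\textbf{Step 2 (shells and packing).} For $j\geq1$ let $S_j=\{f\in\Theta_{n,\delta,c}: 10j\delta c\leq\|f-f_0\|_n<10(j+1)\delta c\}$. Choose a maximal $6\sqrt2\delta$-separated subset of $S_j$ in $\|\cdot\|_n$. Its cardinality is at most $M(6\sqrt2\delta,\Theta_{n,\delta,c},\|\cdot\|_n)\leq e^{12n\delta^2c^2}$ by Lemma~\ref{lemma1}. Every $f\in S_j$ lies within $6\sqrt2\delta\leq 5\delta c$ of some center $f_1$. Define $\psi=\sup_j\max_{f_1}\phi_{f_1}$. Then for $f\in S_j$ (and all larger shells), $P_{n,f}(1-\psi)\leq P_{n,f}(1-\phi_{f_1})$ for the nearest center, which gives the bound in \eqref{test2}.

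\textbf{Step 3 (type I error).} By a union bound,
\begin{align*}
P_{n,f_0}\psi\leq\sum_{j\geq1} e^{12n\delta^2c^2}e^{-100nj^2\delta^2c^2/8}\leq e^{12n\delta^2c^2}\sum_j e^{-12.5 nj^2\delta^2c^2}.
\end{align*}
Since $12.5j^2-12\geq \tfrac12 j^2$, this is at most $\sum_j e^{-\frac12 nj^2\delta^2c^2}$. Using $n\delta^2\geq1$ from $n\geq n(\delta)\geq 1/\delta^2$ and $c>1$, the geometric-type sum is bounded by a constant multiple of $e^{-\frac12 n\delta^2c^2}$, and the constant $9$ easily covers it.

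\textbf{Main obstacle.} The delicate part is matching the exact constants of Step 1 to the stated exponents. The single-alternative test must deliver exponent $n\|f_1-f_0\|_n^2/8$ on both sides while tolerating a perturbation of radius up to half the separation. If the sharp constant fails, I would first move the threshold: test at $\tfrac12$ but center the neighbourhood bound using the lower shell edge $10j\delta c$ rather than $\|f_1-f_0\|_n$. If that is still not enough, I would absorb the loss using the slack between $12$ and $12.5$ in the Step 3 union bound.
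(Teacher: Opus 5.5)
\textbf{There is a genuine gap: the local test in Steps 1--2 does not give \eqref{test2} with the stated exponent, and none of your proposed patches closes it.} Your architecture is the right kind (shells, a $6\sqrt2\delta$-packing of each shell, the maximum of local likelihood-ratio tests, a union bound), and Step 3 is fine given per-center bounds. What fails is the per-center type II bound.

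\textbf{Where Step 2 breaks.} Step 1 correctly shows two things, with $d=\|f_1-f_0\|_n$: type I error $e^{-nd^2/8}$, and type II error $e^{-nd^2/32}$ on the ball of radius $d/4$ about $f_1$. Step 2, however, uses balls of radius $\rho=6\sqrt2\delta$. Take $j=1$, $c=6\sqrt2/5$ and a center on the inner edge of the shell, so $d=10\delta c=12\sqrt2\delta$. Then $\rho=d/2$, and your Cauchy--Schwarz argument leaves zero margin.

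Suppose you also use the constraint $\|f-f_0\|_n\geq 10j\delta c$. The component of $f-f_0$ along $f_1-f_0$ is then only guaranteed to be $d-\rho^2/(2d)$. The margin over the threshold is therefore $\sqrt n\,(d/2-\rho^2/(2d))=4.5\sqrt2\,\sqrt n\,\delta$. This gives type II error of order $e^{-20.25n\delta^2}$, against the required $e^{-100n\delta^2c^2/8}=e^{-36n\delta^2}$, and nothing in your argument rules such $f$ out.

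\textbf{Why your patches do not work.}
\begin{itemize}
\item Shrinking the radius is not available, because Lemma~\ref{lemma1} controls the packing number only at radius $6\sqrt2\delta$.
\item Measuring from the lower shell edge gains nothing for centers lying on that edge.
\item Lowering the threshold below $d/2$ pushes the per-center type I error above $e^{-12.5n\delta^2c^2}$. The gap between $12$ and $12.5$ is exactly what yields $9e^{-n\delta^2c^2/2}$ in \eqref{test1}, so there is no slack left to absorb a loss.
\end{itemize}

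\textbf{What the paper does instead.} The paper does not construct the local tests. It invokes Lemma 13 of van der Vaart and van Zanten (2011), which for any $\kappa$ provides a test with
\[
P_{n,f_0}\psi\leq 9M(\kappa/2,\Theta,\|\cdot\|)e^{-\kappa^2/8},
\qquad
\sup_{f\in\Theta:\|f-f_0\|\geq j\kappa}P_{n,f}(1-\psi)\leq e^{-j^2\kappa^2/8}
\]
in the Euclidean norm. It then sets $\kappa=10\sqrt n\delta c$. The hypothesis $c\geq 6\sqrt2/5$ is used only for monotonicity of packing numbers:
\[
M(5c\delta,\Theta_{n,\delta,c},\|\cdot\|_n)\leq M(6\sqrt2\delta,\Theta_{n,\delta,c},\|\cdot\|_n)\leq e^{12n\delta^2c^2}.
\]
So the exact exponents you are trying to produce are precisely the content the paper imports rather than derives. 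A self-contained proof along your lines would need that lemma, or an equivalent sharper local test.

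\textbf{What your argument does prove.} If $c\geq 12\sqrt2/5$, then $6\sqrt2\delta\leq \tfrac14\cdot 10j\delta c$ for all $j$, so Step 1 applies. You then get \eqref{test1} as stated, together with type II error $e^{-100nj^2\delta^2c^2/32}$. That weaker bound would still carry the proof of Theorem~\ref{theorem1}, since $100c^2/32>c^2+1$ for such $c$. It is not, however, the lemma as stated.
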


\begin{proof}
Let us consider $\kappa>1$ and the test $\psi$ discussed in Lemma 13 of \citep{van2011information} and apply it in our present scenario corresponding to the set $\Theta_{n,\delta,c}$. Then we have for every integer $j\geq1$,
\begin{gather*}
 P_{n,f_0}(\psi) \leq 9M(\kappa/2,\Theta_{n,\delta ,c},\left\| \cdot\right\|)e^{-\kappa^{2}/8},\\
 \sup_{f\in\Theta_{n,\delta,c}:\left\|f-f_0 \right\|\geq j\kappa}    P_{n,f}(1-\psi) \leq e^{-j^2\kappa^{2}/8}
\end{gather*} 
where $\left\| \cdot\right\|$ is the Euclidean norm. Replacing $\kappa$ with $10\sqrt{n}\delta c$ we get for every integer $j\geq1$,
\begin{gather*}
 P_{n,f_0}(\psi)\leq 9M(5c\delta,\Theta_{n,\delta ,c},\left\| \cdot\right\|_n)e^{-100n\delta^{2}c^{2}/8},\\
 \sup_{f\in\Theta_{n,\delta,c}:\left\|f-f_0 \right\|_{n}\geq 10j\delta c} P_{n,f}(1-\psi)\leq  e^{-100nj^2\delta^{2}c^{2}/8}.
\end{gather*}
Thus, we have proved (\ref{test2}). Since we assumed $5c\geq 6\sqrt{2}$, applying (\ref{entpy}) of Lemma \ref{lemma1}  we get
\begin{align*}
   M(5c\delta, \Theta_{n,\delta ,c},\left\| \cdot\right\|_n ) \leq e^{12n\delta^{2}c^{2}} .
\end{align*}
Thus, we have proved (\ref{test1}).
\end{proof}

Finally, we are ready to provide the proof of Theorem \ref{theorem1}.

\subsection{Proof of Theorem \ref{theorem1}}

For any   $\delta>0$, $n\geq n\left ( \delta  \right )$  and $c \geq 6\sqrt{2}/5$, define a set $\Theta_{n,\delta,c}$ as in Lemma \ref{lemma1}. Further, consider the test $\psi$ based on $O_{1:n}|f \sim N(f_n,\bI_n)$ discussed in Lemma \ref{lemma2}. Define two events $\mathfrak{A}$ and $\mathfrak{B}$ such that in the event $\mathfrak{A}$,
\begin{align}\label{eventa}
 \int \frac{p_{n,f}}{p_{n,f_0}}d\Pi(f)\geq  e^{-n\delta^{2}c^{2}}\Pi(f:\left\| f-f_0\right\|_n < c\delta)   
\end{align}
while in the event $\mathfrak{B}$,
\begin{align}\label{eventb}
   \int \frac{p_{n,f}}{p_{n,f_0}}d\Pi(f)\geq  e^{-n\delta^{2}(c^{2}+1)}. 
\end{align}
Using assumption (A.4) we can show that 
\begin{align*}
e^{-n\delta^{2}c^{2}}\Pi(f:\left\| f-f_0\right\|_n < c\delta) \geq e^{-n\delta^{2}c^{2}}\Pi(f:\left\| f-f_0\right\|_n < \delta) \geq  e^{-n\delta^{2}(c^{2}+1)}.
\end{align*} 
Then we can conclude that the event $\mathfrak{A}$ implies $\mathfrak{B}$. Applying Lemma 14 of \citep{van2011information} to the event $\mathfrak{A}$ we get
\begin{align}\label{eventc}
    P_{n,f_0}(\mathfrak{B})\geq  P_{n,f_0}(\mathfrak{A}) \geq 1- e^{-n\delta^{2}c^{2}/8}.
\end{align}
Let $1_{\mathfrak{B}}$ be the indicator function for the event $\mathfrak{B}$. We can show that (see Proposition 11 of \citep{van2011information})
\begin{align*}
 P_{n,f_0}\left [ \Pi(f:\left\| f-f_0\right\|_{n}>10c\delta |O_{1:n})  \right ] \leq E_1 +E_2 +E_3 +E_4   
\end{align*}
such that
\begin{gather*}
E_1 = P_{n,f_0}(\psi ), \\
E_2 =P_{n,f_0}(\mathfrak{B}^{c} ), \\
E_3 = P_{n,f_0}\left [ \Pi(f\notin \Theta_{n,\delta,c} |O_{1:n})1_{\mathfrak{B}}  \right ],\\
E_4 = P_{n,f_0}\left [ \Pi(f\in \Theta_{n,\delta,c}:\left\| f-f_0\right\|_{n}>10c\delta |O_{1:n})(1-\psi )1_{\mathfrak{B}} \right ].
\end{gather*}
Applying (\ref{test1}) of Lemma \ref{lemma2} we can show that
\begin{align}\label{bounde1}
 E_1\leq    9 e^{-\tfrac{1}{2}n\delta^{2}c^{2}}.
\end{align}
Applying (\ref{eventc}) we can show that
\begin{align}\label{bounde2}
 E_2\leq     e^{-n\delta^{2}c^{2}/8}.
\end{align}
Applying (17) of \citep{van2011information} and (\ref{setpr}) of Lemma \ref{lemma1} we can show that 
\begin{align}\label{bounde3}
 E_3\leq     e^{-n\delta^{2}(c^{2}-1)}.
\end{align}
Using (\ref{eventb}) we can show that for any set $S$
\begin{align}\label{eventd}
 \Pi(S|O_{1:n})1_{\mathfrak{B}} \leq e^{n\delta^{2}(c^{2}+1)} \int_{S} \frac{p_{n,f}}{p_{n,f_0}}d\Pi(f).   
\end{align}
Note that $P_{n,f_0}\left [ (p_{n,f}/p_{n,f_0})(1-\psi ) \right ]\leq P_{n,f}(1-\psi )$. For every integer $j\geq1$, define disjoint sets $\Theta_{j,n,\delta,c}$ such that
\begin{align*}
  \Theta_{j,n,\delta,c}=\left\{f \in \Theta_{n,\delta,c}:10j\delta c \leq \left\|f-f_0 \right\|_n < 10(j+1)\delta c \right\}.  
\end{align*}
Then for any $f \in \Theta_{j,n,\delta,c}$, applying (\ref{test2}) of Lemma \ref{lemma2}, we have $P_{n,f}(1-\psi)\leq  e^{-100nj^2\delta^{2}c^{2}/8}$. Then using (\ref{eventd}), Fubini's Theorem and (\ref{test2}) of Lemma \ref{lemma2} we have
\begin{align}\label{bounde4}
E_4 &\leq e^{n\delta^{2}(c^{2}+1)}\sum_{j\geq 1}P_{n,f_0}(1-\psi )\int_{ \Theta_{j,n,\delta,c}}\frac{p_{n,f}}{p_{n,f_0}}d\Pi(f) \leq e^{n\delta^{2}(c^{2}+1)}\sum_{j\geq 1}  e^{-100nj^2\delta^{2}c^{2}/8} \leq 9 e^{-n\delta^{2}(c^{2}-1)}
\end{align}
as $1/(1-e^{-100n\delta^{2}c^{2}/8}) \leq 1/(1-e^{-1/8})\leq 9$. Then combining (\ref{bounde1}), (\ref{bounde2}), (\ref{bounde3}) and (\ref{bounde4}) we get the following.
\begin{align*}
 P_{n,f_0}\left [ \Pi(f:\left\| f-f_0\right\|_{n}>10c\delta |O_{1:n})  \right ] \leq 20  e^{-n\delta^{2}(c^{2}-1)/8}. 
\end{align*}
For any $\varepsilon>0$, choose some $c \geq 6\sqrt{2}/5$ and set $\delta=\varepsilon/10c$. Then for every $n\geq n\left ( \delta  \right )$ we have
\begin{align*}
 P_{n,f_0}\left [ \Pi(f:\left\| f-f_0\right\|_{n}>\varepsilon |O_{1:n})  \right ] \leq 20  e^{-n\varepsilon^{2}/1600}
\end{align*}
as$(c^2-1)/c^2 \geq 1/2$ for any $c \geq 6\sqrt{2}/5$. Then, as $n \to \infty$, we have the desired result. Hence the proof of Theorem \ref{theorem1} is complete.

\newpage
\section{Detailed Computation of Posteriors}\label{posterior-deriv}

\subsection{Conditional Posterior Distributions in Stage 1 of AMO-GP}

Aligning with the notations in Section~\ref{mf1} of the main manuscript, we have

$$
\bA_{i}^{P\times P}=\begin{pmatrix}
0 & a_{i,(1,2)} &\hdots   & a_{i,(1,P)} \\
a_{i,(1,2)} & 0 & \hdots & a_{i,(2,P)} \\
\vdots  & \vdots & \vdots & \vdots \\
 a_{i,(1,P)}& \hdots & a_{i,(P-1,P)}  & 0 \\
\end{pmatrix}.
$$

Now define

$$\ba_{i}^{\tfrac{P\left ( P-1 \right )}{2}\times 1}=\begin{pmatrix}
a_{i,(1,2)} \\
\vdots  \\
a_{i,(1,P)}\\
a_{i,(2,3)}\\
\vdots\\
a_{i,(2,P)}\\
\vdots\\
a_{i,(P-1,P)}
\end{pmatrix}, \;\bepsilon_{i}^{\tfrac{P\left ( P-1 \right )}{2}\times 1}=\begin{pmatrix}
\epsilon_{i,(1,2)} \\
\vdots  \\
\epsilon_{i,(1,P)}\\
\epsilon_{i,(2,3)}\\
\vdots\\
\epsilon_{i,(2,P)}\\
\vdots\\
\epsilon_{i,(P-1,P)}
\end{pmatrix} \;\text{ and }\bw_{i}^{\tfrac{P\left ( P-1 \right )}{2}\times 1}=\begin{pmatrix}
\tilde{\bu}_{i,1}^{T}\tilde{\bu}_{i,2} \\
\vdots  \\
\tilde{\bu}_{i,1}^{T}\tilde{\bu}_{i,P}\\
\tilde{\bu}_{i,2}^{T}\tilde{\bu}_{i,3}\\
\vdots\\
\tilde{\bu}_{i,2}^{T}\tilde{\bu}_{i,P}\\
\vdots\\
\tilde{\bu}_{i,P-1}^{T}\tilde{\bu}_{i,P}
\end{pmatrix}.
$$

Then we have

$$
\ba_{i}= \mu_i\mathbf{1}^{\tfrac{P\left ( P-1 \right )}{2}\times 1} + \bw_{i} + \bepsilon_{i}, \:\: i \in \left\{ 1,\hdots,n \right\}.
$$

Assuming a flat prior for $\mu_i$ $\left ( i \in \left\{ 1,\hdots,n \right\} \right )$, conditioning on other parameters we have the conditional posterior distribution of $\mu_i$ 

$$
f(\mu _{i}|\ba_{i},\bw_{i},\sigma_u^2)\propto N(\ba_{i}|\mu_i\mathbf{1}+\bw_{i},\sigma_u^2\boldsymbol{I}_{\tfrac{P\left ( P-1 \right )}{2}})\times 1.
$$
Then a few algebraic steps will show us that $\mu _{i}|\ba_{i},\bw_{i},\sigma_u^2 \sim N(\tfrac{2}{P\left ( P-1 \right )}\mathbf{1}^{T}(\ba_{i}-\bw_{i}),\tfrac{2}{P\left ( P-1 \right )}\sigma_u^2)$.
For the conditional posterior distribution of $\sigma_u^2$  we have,
$$
f(\sigma_u^2|\ba,\bw, \bmu)\propto N(\ba|\bmu+\bw,\sigma_u^2\boldsymbol{I}_{\tfrac{nP\left ( P-1 \right )}{2}})\times IG(\sigma_u^2|\alpha,\beta)
$$
where we assume $\alpha=\beta=1$ and define
$$\ba^{\tfrac{nP\left ( P-1 \right )}{2}\times 1}=\begin{pmatrix}
\ba_{1} \\
\vdots  \\
\ba_{n}
\end{pmatrix}, \;\bw^{\tfrac{nP\left ( P-1 \right )}{2}\times 1}=\begin{pmatrix}
\bw_{1} \\
\vdots  \\
\bw_{n}
\end{pmatrix} \;\text{ and }\bmu^{\tfrac{nP\left ( P-1 \right )}{2}\times 1}=\begin{pmatrix}
\mu_1\mathbf{1} \\
\vdots  \\
\mu_n\mathbf{1}
\end{pmatrix}.
$$
Then a few algebraic steps will show us that $\sigma_u^2|\ba,\bw, \bmu \sim IG(\alpha+\tfrac{nP(P-1)}{4},\beta + \tfrac{\left\| \ba - \bmu - \bw \right\|^{2}}{2})$, $\left\| \cdot \right\|$ representing the Euclidean norm.\\

We have $\forall \: p \in \left\{ 1,\hdots,P \right\}$ and $\forall \: i \in \left\{ 1,\hdots,n \right\}$,  $\tilde{\bu}_{i,p}^{R \times 1}=\begin{pmatrix}
u_{i,p,1} \\
\vdots  \\
u_{i,p,R}
\end{pmatrix}$ and define
$$
\ba_{i,p}^{(P-1) \times 1}=\begin{pmatrix}
a_{i,(p,1)} \\
\vdots  \\
a_{i,(p,p-1)}\\
a_{i,(p,p+1)}\\
\vdots \\
a_{i,(p,P)}
\end{pmatrix} , \;
\bU_{i,p}^{(P-1) \times R}=\begin{pmatrix}
\tilde{\bu}_{i,1}^{T} \\
\vdots  \\
\tilde{\bu}_{i,p-1}^{T}\\
\tilde{\bu}_{i,p+1}^{T}\\
\vdots \\
\tilde{\bu}_{i,P}^{T}
\end{pmatrix} \text{ and } \bepsilon_{i,p}^{(P-1) \times 1}=\begin{pmatrix}
\epsilon_{i,(p,1)} \\
\vdots  \\
\epsilon_{i,(p,p-1)}\\
\epsilon_{i,(p,p+1)}\\
\vdots \\
\epsilon_{i,(p,P)}
\end{pmatrix}.
$$

Then it is easy to see that 
$$
\ba_{i,p}=\mu_i\mathbf{1}^{(P-1)\times 1} + \bU_{i,p}\tilde{\bu}_{i,p} + \bepsilon_{i,p}.
$$

For the conditional posterior distribution of $\tilde{\bu}_{i,p}$ we have,
$$
f(\tilde{\bu}_{i,p}|\ba_{i,p},\mu_i,\bU_{i,p},\sigma_u^2) \propto N(\ba_{i,p}|\mu_i\mathbf{1}^{(P-1)\times 1} + \bU_{i,p}\tilde{\bu}_{i,p},\sigma_u^2\boldsymbol{I}_{P-1} ) \times N(\tilde{\bu}_{i,p}|\mathbf{0}^{R\times 1},\boldsymbol{I}_R)
$$

Then a few algebraic steps will show us that $
\tilde{\bu}_{i,p}|\ba_{i,p},\mu_i,\bU_{i,p},\sigma_u^2 \sim N(\bet_{i,p},\bPhi_{i,p})$ where

$$
\bPhi_{i,p}=(\boldsymbol{I}_R+\tfrac{1}{\sigma_u^2}\bU_{i,p}^T\bU_{i,p})^{-1} \text{ and } \bet_{i,p}=\tfrac{1}{\sigma_u^2}\bPhi_{i,p}\bU_{i,p}^T(\ba_{i,p}-\mu_i\mathbf{1}).
$$

\subsection{Conditional Posterior Distributions in Stage 2 of AMO-GP}

Aligning with the notations in Section~\ref{mf2} of the main manuscript, we define
$$
\bo_{i,p}^{V_p \times 1}= \begin{pmatrix}
o_{i,p}(\bv_{p,1}) \\
\vdots \\
o_{i,p}(\bv_{p,V_p}) 
\end{pmatrix}, \; \bbeta_{p,g}^{V_p \times 1}= \begin{pmatrix}
\beta_{p,g}(\bv_{p,1}) \\
\vdots \\
\beta_{p,g}(\bv_{p,V_p}) 
\end{pmatrix},$$
$$
\bbeta_{p,w}^{V_p \times 1}= \begin{pmatrix}
\beta_{p,w}(\bv_{p,1}) \\
\vdots \\
\beta_{p,w}(\bv_{p,V_p}) 
\end{pmatrix}, \; \bepsilon_{i,p}^{V_p \times 1}= \begin{pmatrix}
\epsilon_{i,p}(\bv_{p,1}) \\
\vdots \\
\epsilon_{i,p}(\bv_{p,V_p}) 
\end{pmatrix},
  $$
 $$
 \bG_{i,p}^{V_p \times V_p}=\begin{bmatrix}
    g_{i,p}(\bv_{p,1}) & & \\
    & \ddots & \\
    & & g_{i,p}(\bv_{p,V_p})
  \end{bmatrix} \; \text{ and } \bW_{i,p}^{V_p \times V_p}=\begin{bmatrix}
    w_{i,p}(\bv_{p,1}) & & \\
    & \ddots & \\
    & & w_{i,p}(\bv_{p,V_p})
  \end{bmatrix}.  
$$

where $\bv_{p,1},\hdots,\bv_{p,V_p}$ represent coordinates corresponding to the $V_p$ ROIs nested within the $p$th subnetwork. Assuming $n$ subjects corresponding to the training data, we denote

$$
\bo_{p}^{nV_p \times 1}= \begin{pmatrix}
\bo_{1,p} \\
\vdots \\
\bo_{n,p}
\end{pmatrix}, \; \mathbf{f}_{p}^{n \times 1}=\begin{pmatrix}
f_p^{N}(\widehat{\bu}_{1,p}) \\
\vdots \\
f_p^{N}(\widehat{\bu}_{n,p})
\end{pmatrix}, \; \bepsilon_{p}^{nV_p \times 1}= \begin{pmatrix}
 \bepsilon_{1,p} \\
\vdots \\
 \bepsilon_{n,p}
\end{pmatrix},
$$
$$
\bG_{p}^{nV_p \times V_p}= \begin{bmatrix}
\bG_{1,p} \\
 \vdots\\
 \bG_{n,p}
\end{bmatrix}, \; \bW_{p}^{nV_p \times V_p}= \begin{bmatrix}
\bW_{1,p} \\
 \vdots\\
 \bW_{n,p}
\end{bmatrix} \; \text{ and } \bL_p^{nV_p \times n}=\begin{bmatrix}
    \mathbf{1}^{V_p \times 1} & & \\
    & \ddots & \\
    & & \mathbf{1}^{V_p \times 1}
  \end{bmatrix}.
$$

Then we have,

$$
\bo_p=\bL_p\, \mathbf{f}_{p} + \bG_p\, \bbeta_{p,g} + \bW_p\, \bbeta_{p,w} + \bepsilon_{p}.
$$

Let $\Sigma_{p,g}$, $\Sigma_{p,w}$ and $\Sigma_{p,N}$ denote the covariance matrices corresponding to exponential covariance kernels $\kappa_{p,g}$, $\kappa_{p,w}$, and $\kappa_{p,N}$ for $\bbeta_{p,g}$, $\bbeta_{p,w}$ and $\mathbf{f}_{p}$ respectively. Further, let us define the parameter set $\Delta=\left\{ \tau_p^2, \theta_{p,g},\sigma_{p,g}^2, \theta_{p,w},\sigma_{p,w}^2, \theta_{p,N},\sigma_{p,N}^2\right\}$. Note that 
$$
\bepsilon_{p} \sim N(\bzero^{nV_p \times 1},\tau_p^2\boldsymbol{I}_{nV_p}), \; \bbeta_{p,g}|\theta_{p,g},\sigma_{p,g}^2 \sim N(\bzero,\Sigma_{p,g}),
$$
$$
\bbeta_{p,w}|\theta_{p,w},\sigma_{p,w}^2 \sim N(\bzero,\Sigma_{p,w}), \; \mathbf{f}_{p}|\theta_{p,N},\sigma_{p,N}^2 \sim N(\bzero,\Sigma_{p,N}) \text{ and} 
$$
$$
\bo_p|\Delta \sim N(\bzero, \bL_p\Sigma_{p,N}\bL_p^{T}+ \bG_p\Sigma_{p,g}\bG_p^{T} + \bW_p\Sigma_{p,w}\bW_p^{T} + \tau_p^2\boldsymbol{I}_{nV_p}).
$$

Let $\zeta_1$, $\zeta_2$, $\zeta_3$, $\zeta_4$ be the logarithmic transformations of the variance parameters $\tau_p^2$, $\sigma_{p,g}^2$, $\sigma_{p,w}^2$ and $\sigma_{p,N}^2$. To implement the Metropolis–Hastings algorithm we consider the joint conditional posterior distribution of $(\zeta_1$, $\zeta_2$, $\zeta_3$, $\zeta_4)$ such that
$$
f(\zeta_1, \zeta_2, \zeta_3, \zeta_4|\bo_p,\theta_{p,g},\theta_{p,w},\theta_{p,N}) \propto f(\bo_p|\Delta) \times f(\zeta_1, \zeta_2, \zeta_3, \zeta_4)
$$
We denote $\varsigma=\tau_p^2 \times \sigma_{p,g}^2 \times \sigma_{p,w}^2 \times \sigma_{p,N}^2$, assume $\alpha_1=\alpha_2=\alpha_3=\alpha_4=1$, $\gamma_1=\gamma_2=\gamma_3=\gamma_4=2$ and have
$$
f(\zeta_1, \zeta_2, \zeta_3, \zeta_4)=IG(\tau_p^2|\alpha_1,\gamma_1)\times IG(\sigma_{p,g}^2|\alpha_2,\gamma_2)\times IG(\sigma_{p,w}^2|\alpha_3,\gamma_3)\times IG(\sigma_{p,N}^2|\alpha_4,\gamma_4) \times \varsigma.
$$

For the conditional posterior distribution of $\mathbf{f}_{p}$ we have
$$
f(\mathbf{f}_{p}|\bo_p, \Delta) \propto N(\bo_p|\bL_p\mathbf{f}_{p}, \Sigma_{p,g,w}) \times N(\mathbf{f}_{p}|\bzero,\Sigma_{p,N})
$$

where $\Sigma_{p,g,w}=\bG_p\Sigma_{p,g}\bG_p^{T} + \bW_p\Sigma_{p,w}\bW_p^{T} + \tau_p^2\boldsymbol{I}_{nV_p}$. Define $\Sigma_{p,N,O}= (\Sigma_{p,N}^{-1} + \bL_p^{T}\Sigma_{p,g,w}^{-1}\bL_p )^{-1}$. Then a few algebraic steps show us that 

$$
\mathbf{f}_{p}|\bo_p, \Delta \sim N(\Sigma_{p,N,O}\bL_p^{T}\Sigma_{p,g,w}^{-1}\bo_p, \Sigma_{p,N,O}). 
$$

For the conditional posterior distribution of $\bbeta_{p,g}$ we have
$$
f(\bbeta_{p,g}|\bo_p, \Delta) \propto N(\bo_p|\bG_p\bbeta_{p,g}, \Sigma_{p,N,w}) \times N(\bbeta_{p,g}|\bzero,\Sigma_{p,g})
$$

where $\Sigma_{p,N,w}=\bL_p\Sigma_{p,N}\bL_p^{T} + \bW_p\Sigma_{p,w}\bW_p^{T} + \tau_p^2\boldsymbol{I}_{nV_p}$. Define $\Sigma_{p,g,O}= (\Sigma_{p,g}^{-1} + \bG_p^{T}\Sigma_{p,N,w}^{-1}\bG_p )^{-1}$. Then a few algebraic steps show us that 

$$
\bbeta_{p,g}|\bo_p, \Delta \sim N(\Sigma_{p,g,O}\bG_p^{T}\Sigma_{p,N,w}^{-1}\bo_p, \Sigma_{p,g,O}). 
$$

For the conditional posterior distribution of $\bbeta_{p,w}$ we have
$$
f(\bbeta_{p,w}|\bo_p, \Delta) \propto N(\bo_p|\bW_p\bbeta_{p,w}, \Sigma_{p,N,g}) \times N(\bbeta_{p,w}|\bzero,\Sigma_{p,w})
$$

where $\Sigma_{p,N,g}=\bL_p\Sigma_{p,N}\bL_p^{T} + \bG_p\Sigma_{p,g}\bG_p^{T} + \tau_p^2\boldsymbol{I}_{nV_p}$. Define $\Sigma_{p,w,O}= (\Sigma_{p,w}^{-1} + \bW_p^{T}\Sigma_{p,N,g}^{-1}\bW_p )^{-1}$. Then a few algebraic steps show us that 

$$
\bbeta_{p,w}|\bo_p, \Delta \sim N(\Sigma_{p,w,O}\bW_p^{T}\Sigma_{p,N,g}^{-1}\bo_p, \Sigma_{p,w,O}). 
$$

 Suppose we have $m$ subjects in the test data. Similar to $\mathbf{f}_{p}$ corresponding to the training data, let $\mathbf{f}_{t,p}^{m \times 1}$ represent the vector of non-linear functional effects of $\widehat{\bu}_{i,p}$ for $i=1,\hdots,m$ corresponding to $m$ subjects in the test data. Corresponding to the kernel $\kappa_{p,N}$, suppose we denote $\Sigma_{p,t}^{m \times m}=Var(\mathbf{f}_{t,p})$ and $\Sigma_{p,N,t}^{n \times m}=Cov(\mathbf{f}_{p},\mathbf{f}_{t,p})$. Further, let

$$
\mathbf{f}_{s,p}^{(n+m) \times 1}= \begin{pmatrix}
 \mathbf{f}_{p} \\
 \mathbf{f}_{t,p}
\end{pmatrix}\; \text{ and } \; \Sigma_{p,s}^{(n+m) \times (n+m)}=\begin{bmatrix}
\Sigma_{p,N} & \Sigma_{p,N,t} \\
\Sigma_{p,N,t}^{T} &  \Sigma_{p,t}\\
\end{bmatrix}.
$$ 

Note that $\mathbf{f}_{s,p} \sim N(\bzero,\Sigma_{p,s})$. Then a few algebraic steps show us that
$$
f(\mathbf{f}_{t,p}|\bo_p,\Delta)=N(\mathbf{f}_{t,p}|\Sigma_{p,N,t}^{T}\Sigma_{p,N}^{-1}\mathbf{f}_{p}, \Sigma_{p,t} - \Sigma_{p,N,t}^{T}\Sigma_{p,N}^{-1}\Sigma_{p,N,t}) \times f(\mathbf{f}_{p}|\bo_p,\Delta).
$$

\newpage
\section{Detailed Simulation Setup}\label{supp:simdatgen}

This section provides full details of the data-generating mechanisms used in the simulation studies described in Section~\ref{sim_studies} of the main text to ensure complete reproducibility.

\subsection{Network Model for Stage 1} 

We consider $P=50$ subnetworks and $n=100$ subjects for both training and test datasets. Each subnetwork contains $V=12$ regions of interest (ROIs), indexed by spatial coordinates $\bv=(v_1,v_2,v_3) \in [0,10]^3$. For each replicate, ROI locations are independently sampled from a uniform distribution over $[0,10]^3$ and are shared across subjects within each subnetwork.

The network predictor is constructed using $R=3$ latent components. For each subject $i=1,\ldots,n$ and subnetwork $p=1,\ldots,P$, we generate latent variables
\begin{align*}
\tilde{\bu}_{i,p} \sim N(\mathbf{0}, \mathbf{I}_3), \qquad
\mu_i \sim N(0,1), \qquad
\epsilon_{i,(p,p')} \sim N(0,0.01), \quad p \neq p',
\end{align*}
independently across all indices. The $(p,p')$th edge of the network predictor $a_{i,(p,p')}$ is then constructed according to:
\begin{align}\label{latent_scale-supp}
a_{i,(p,p')}=\mu_i+\tilde{\bu}_{i,p}^{T}\tilde{\bu}_{i,p'}+\epsilon_{i,(p,p')}.
\end{align}

\subsection{Functional Predictors for Stage 2}

For each subject $i$, subnetwork $p$, and ROI location $\bv$, we generate two functional predictors independently as 
$w_{i,p}(\bv) \sim N(0,1)$ and $g_{i,p}(\bv) \sim N(0,1)$,
independently across all $i,p,\bv$.

\subsection{Gaussian Process Specifications}

In scenarios where Gaussian process (GP) priors are used, we generate functions independently across subnetworks. Specifically, for each $p$, $f_p^N(\cdot)$ is drawn from a zero-mean GP with exponential covariance kernel $\kappa_{p,N}$; and 
$\beta_{p,g}(\bv)$ and $\beta_{p,w}(\bv)$ are drawn from zero-mean GPs with kernels $\kappa_{p,g}$ and $\kappa_{p,w}$, respectively. The covariance kernels of these GPs kernels are all exponential kernels, i.e., $\kappa(\bx, \bx') = \sigma^2 \exp\left(-\theta\|\bx - \bx'\|\right)$, with the kernel parameters fixed across subnetworks as follows:
\begin{align*}
\sigma_{p,N}^{*2} = 3, \qquad \theta_{p,N}^* = 6, \qquad
\sigma_{p,g}^{*2} = 3, \qquad \theta_{p,g}^* = 8, \qquad
\sigma_{p,w}^{*2} = 2, \qquad \theta_{p,w}^* = 5.
\end{align*}

\subsection{Scenario-Specific Data Generation}

For each subject $i$, subnetwork $p$, and ROI $\bv$, the response $o_{i,p}(\bv)$ is generated as follows.

\medskip
\noindent\underline{\textbf{Scenario 1 (Additive Gaussian process model).}}  
The response is generated following:
\begin{align*}
o_{i,p}(\bv)
=
f_p^{N}(\tilde{\bu}_{i,p})
+ g_{i,p}(\bv)\beta_{p,g}(\bv)
+ w_{i,p}(\bv)\beta_{p,w}(\bv)
+ \epsilon_{i,p}(\bv),
\end{align*}
where $\epsilon_{i,p}(\bv) \sim N(0,\tau_p^{*2})$ with $\tau_p^{*2}=0.01$.

\medskip
\noindent\underline{\textbf{Scenario 2 (Nonlinear network effect).}}  
The response is generated as
\begin{align*}
o_{i,p}(\bv)
=
\sum_{j=1}^{3} \sin(\tilde{u}_{i,p,j})
+ g_{i,p}(\bv)\beta_{p,g}(\bv)
+ w_{i,p}(\bv)\beta_{p,w}(\bv)
+ \epsilon_{i,p}(\bv),
\end{align*}
with $\epsilon_{i,p}(\bv) \sim N(0,\tau_p^{*2})$ and $\tau_p^{*2}=0.01$. The coefficient functions $\beta_{p,g}(\bv)$ and $\beta_{p,w}(\bv)$ are generated as in Scenario 1.

\medskip
\noindent\underline{\textbf{Scenario 3 (Heavy-tailed noise).}}  
The response is generated as in Scenario 1, but with heavy-tailed noise,
$\epsilon_{i,p}(\bv) \sim t_3 \cdot \sqrt{\tau_p^{*2}/\mathrm{Var}(t_3)}$,
where $\tau_p^{*2}=0.01$ ensures comparable marginal variance to the Gaussian setting.

\medskip
\noindent\underline{\textbf{Scenario 4 (Non-additive interaction).}}  
The response is generated as
\begin{align*}
o_{i,p}(\bv)=
f_p^{N}(\tilde{\bu}_{i,p})
+ g_{i,p}(\bv)\beta_{p,g}(\bv)
+ w_{i,p}(\bv)\beta_{p,w}(\bv)\, +\rho f_p^{N}(\tilde{\bu}_{i,p}) \, g_{i,p}(\bv)\beta_{p,g}(\bv)
+ \epsilon_{i,p}(\bv),
\end{align*}
where $\epsilon_{i,p}(\bv) \sim N(0,\tau_p^{*2})$, $\tau_p^{*2}=0.01$, and $\rho=0.3$ controls the strength of the interaction.

\subsection{Replication Details}

Each scenario is repeated over 10 independent replicates. For each replicate, training and test datasets are generated independently using the same data-generating mechanism but different random seeds.

\subsection{Coefficient Recovery Evaluation}\label{sup_coef_recovery}

In addition to predictive performance, we assess the ability of AMO-GP to recover the underlying spatially varying coefficients. Since comparable coefficient estimates are not available for the competing methods, this evaluation is restricted to AMO-GP.

For each subnetwork $p = 1, \ldots, P$, we quantify estimation accuracy for the functional coefficients $\beta_{p,g}(\bv)$ and $\beta_{p,w}(\bv)$ using the averaged relative absolute error,
\begin{equation}
\label{eq:rel-abs-error-supp}
\mathcal{E}_{\beta} 
= \frac{1}{2P}\sum_{p=1}^{P}\left\{
\frac{1}{|\mathcal{R}_p|}\sum_{\bv \in \mathcal{R}_p}
\frac{\left|\widehat{\beta}_{p,g}(\bv) - \beta_{p,g}(\bv)\right|}
{\left|\beta_{p,g}(\bv)\right|}
+
\frac{1}{|\mathcal{R}_p|}\sum_{\bv \in \mathcal{R}_p}
\frac{\left|\widehat{\beta}_{p,w}(\bv) - \beta_{p,w}(\bv)\right|}
{\left|\beta_{p,w}(\bv)\right|}
\right\}.
\end{equation}

This metric provides a normalized measure of estimation error across spatial locations and subnetworks, allowing comparison across scenarios with varying signal strength.

\begin{figure}[H]
    \centering
    \includegraphics[width=0.8\linewidth]{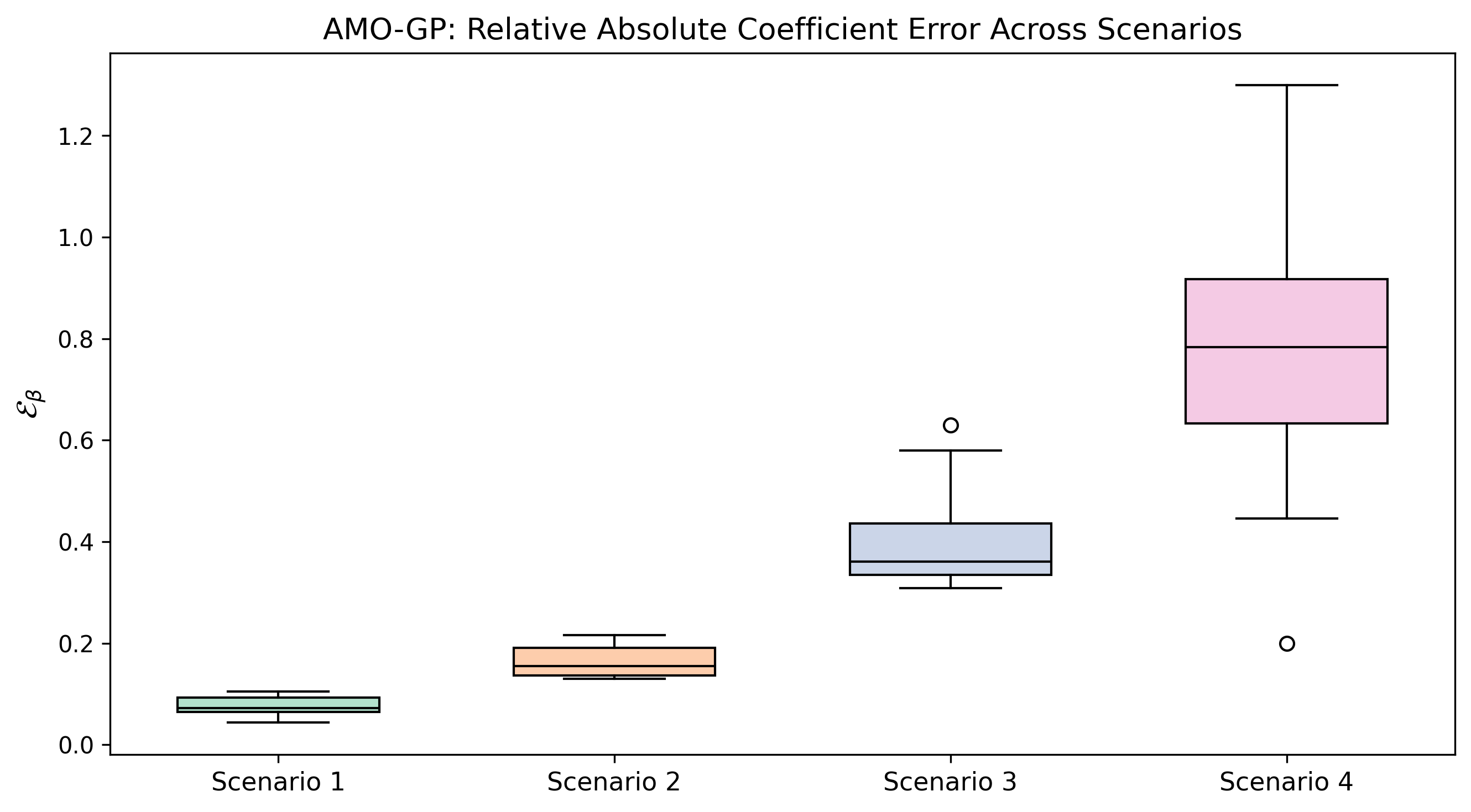}
    \caption{Boxplots of the average relative absolute coefficient estimation error for AMO-GP across 10 replicates under four simulation scenarios. Errors increase and become more variable under misspecification (Scenarios 3 and 4), with visible outliers reflecting heavy-tailed noise and interaction effects.}
    \label{fig:average-relative-coefficient-error}
\end{figure}

Figure~\ref{fig:average-relative-coefficient-error} summarizes the distribution of $\mathcal{E}_{\beta}$ across simulation replicates for all four scenarios. As expected, estimation error is lowest under Scenario 1, where the data-generating mechanism is correctly specified. Scenario 2 exhibits moderate increases in error due to nonlinear misspecification in the network effect. Scenario 3 shows increased variability driven by heavy-tailed noise, while Scenario 4 yields the largest errors due to the presence of non-additive interaction effects not captured by the model.

Overall, these results indicate that AMO-GP is able to accurately recover spatially varying coefficients under correct specification and mild deviations, but performance degrades as model misspecification becomes more severe.

\newpage
\section{Additional Results for ABCD Data Analysis}\label{additional-data-study}
\subsection{Traceplots indicating Stage 2 convergence of AMO-GP}
\label{sup-subsec-ABCDtraceplots}
\noindent For a representative subsample of size 40 of the ABCD data Figure~\ref{traceplot} shows the traceplots for all parameters corresponding to each subnetwork. Each plot shows satisfactory convergence of the MCMC chain.

\begingroup
\centering

\includegraphics[width=.48\textwidth]{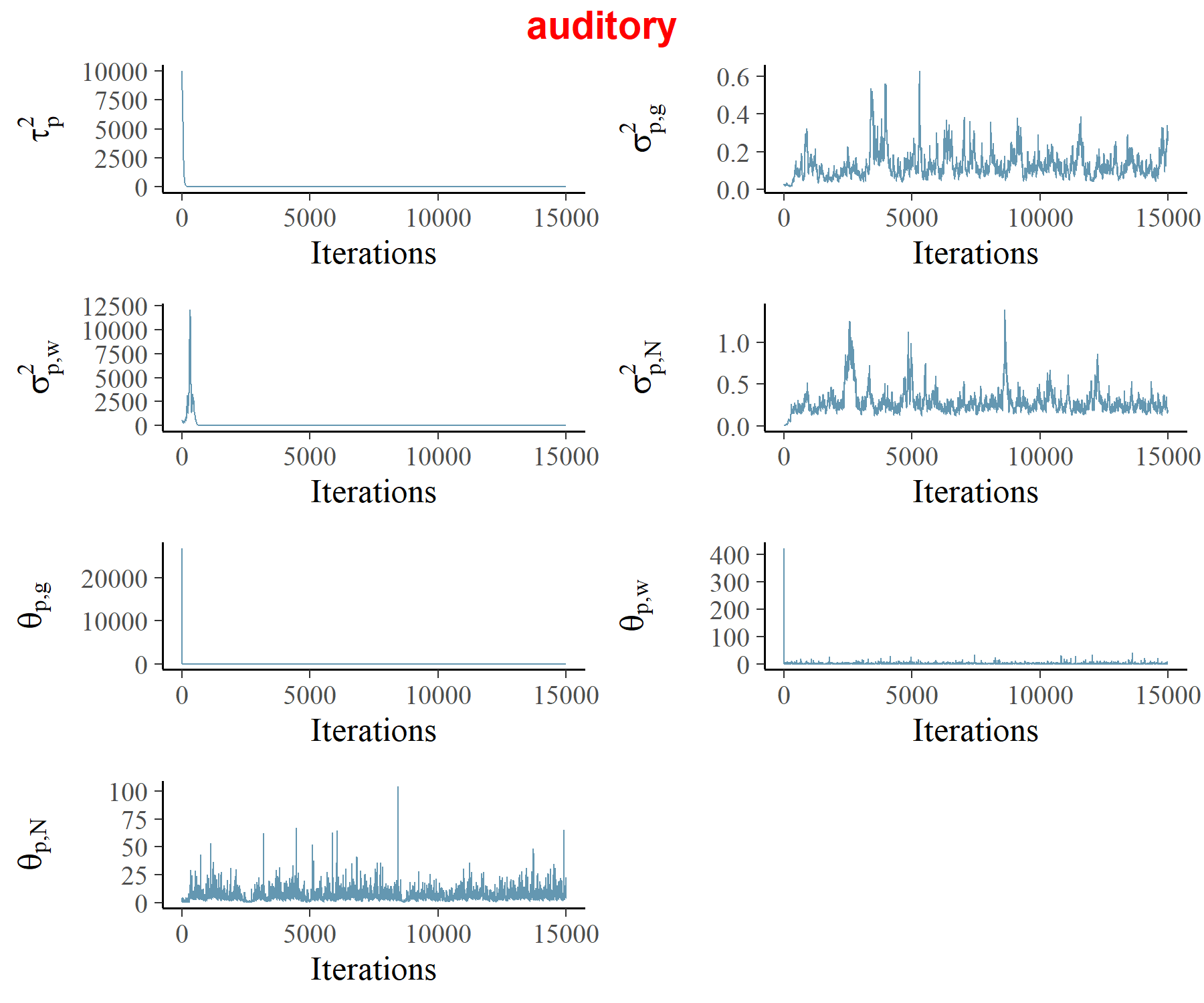}\hfill
\includegraphics[width=.48\textwidth]{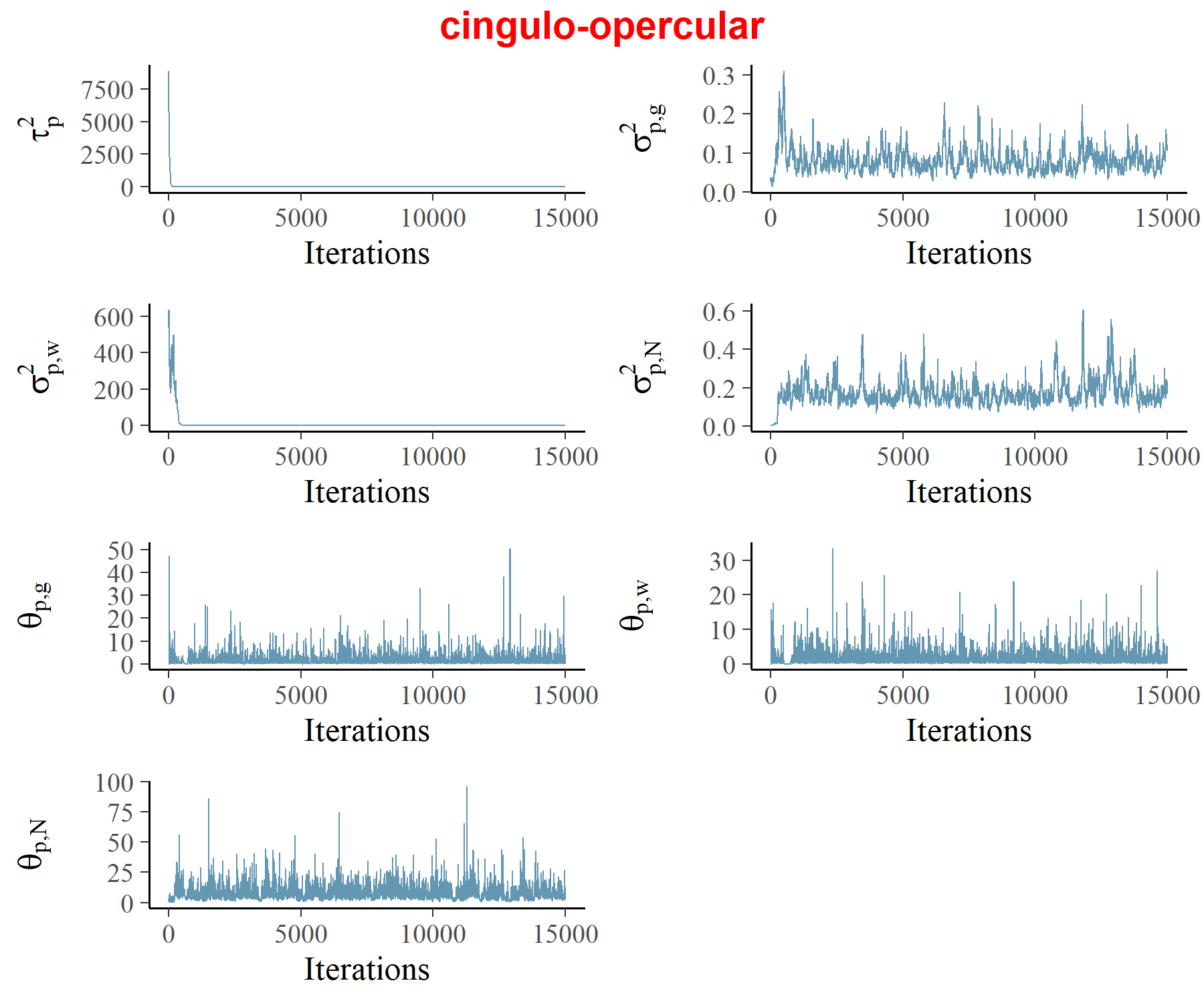}

\vspace{\smallskipamount}

\includegraphics[width=.48\textwidth]{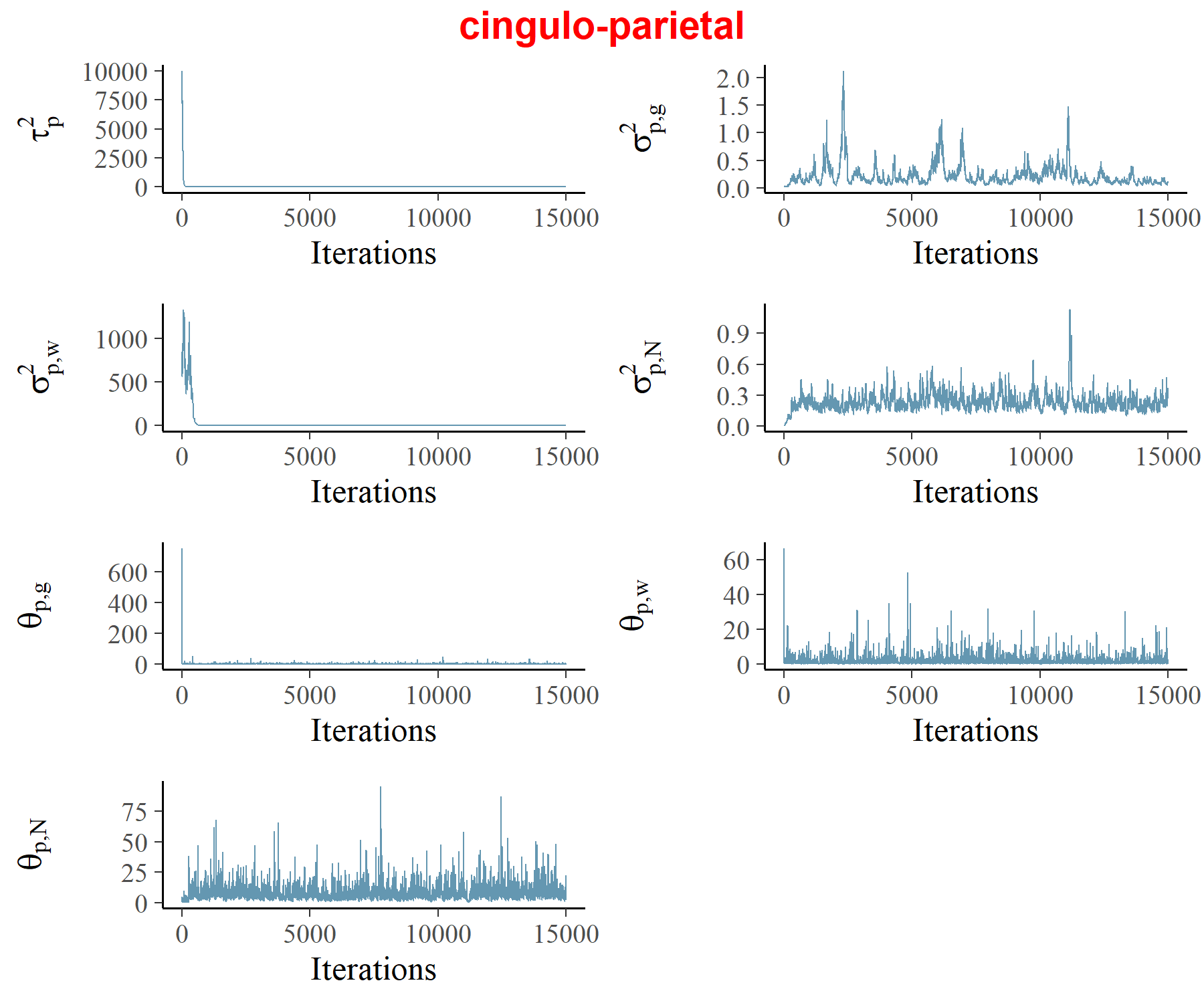}\hfill
\includegraphics[width=.48\textwidth]{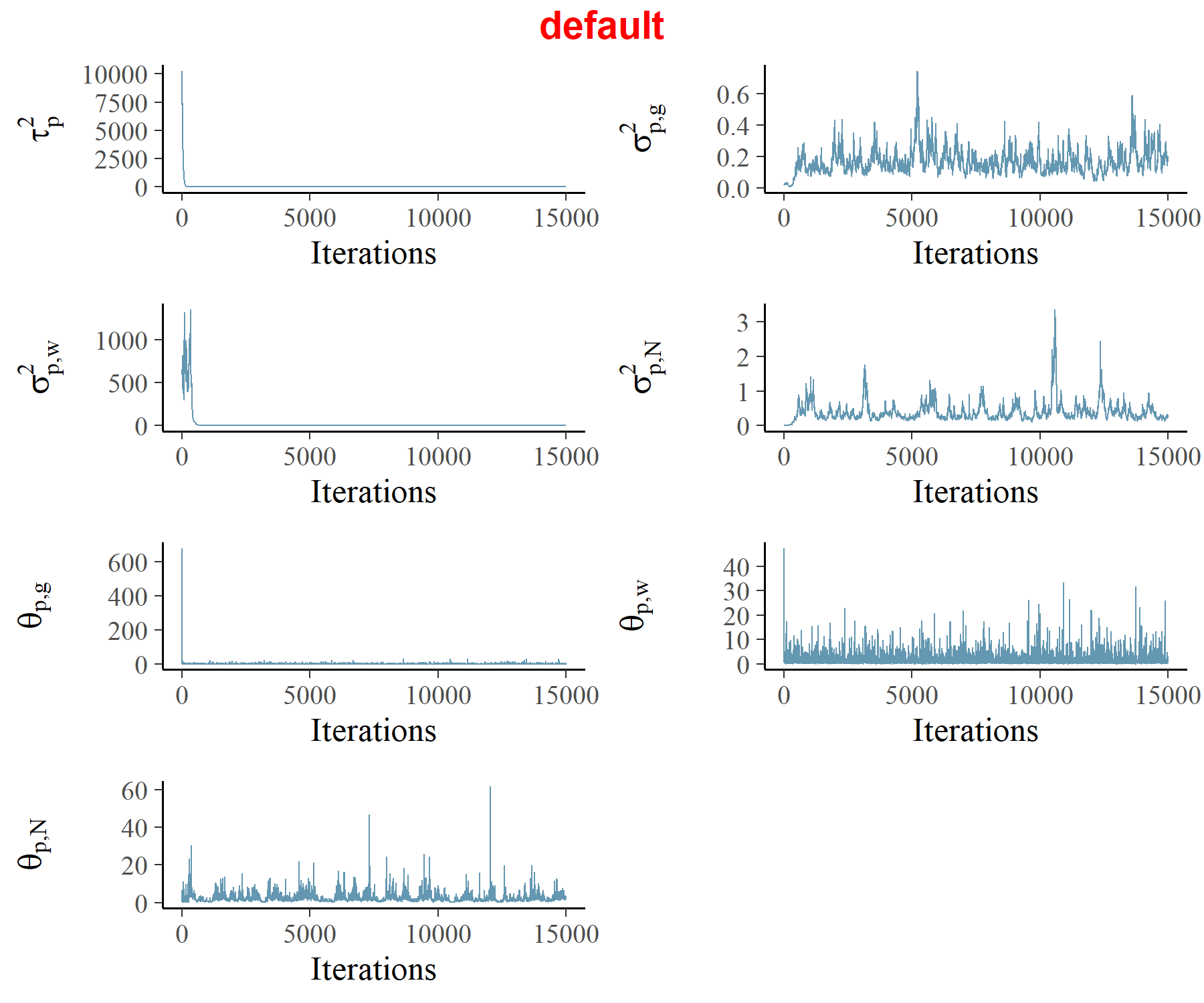}

\vspace{\smallskipamount}

\includegraphics[width=.48\textwidth]{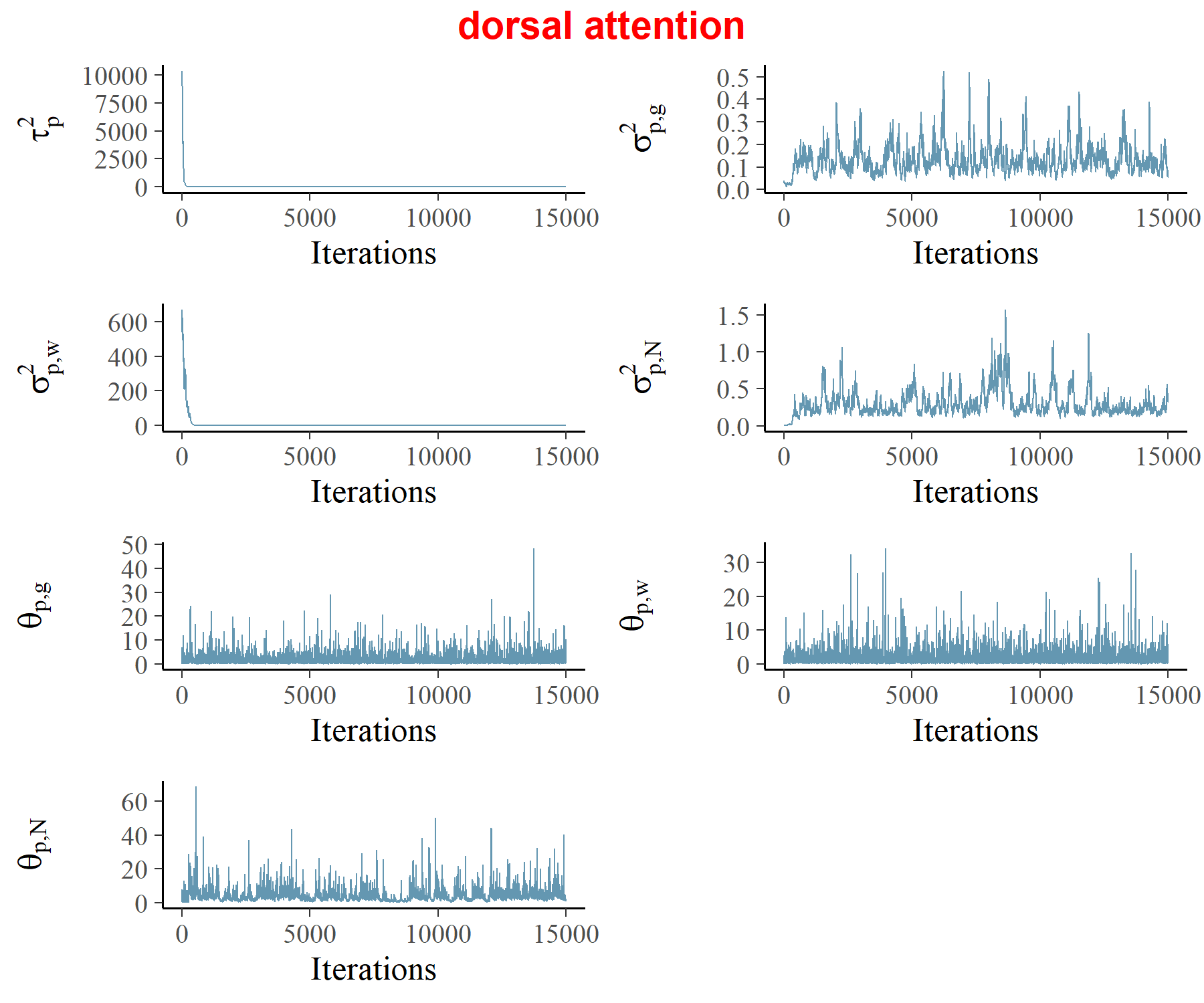}\hfill
\includegraphics[width=.48\textwidth]{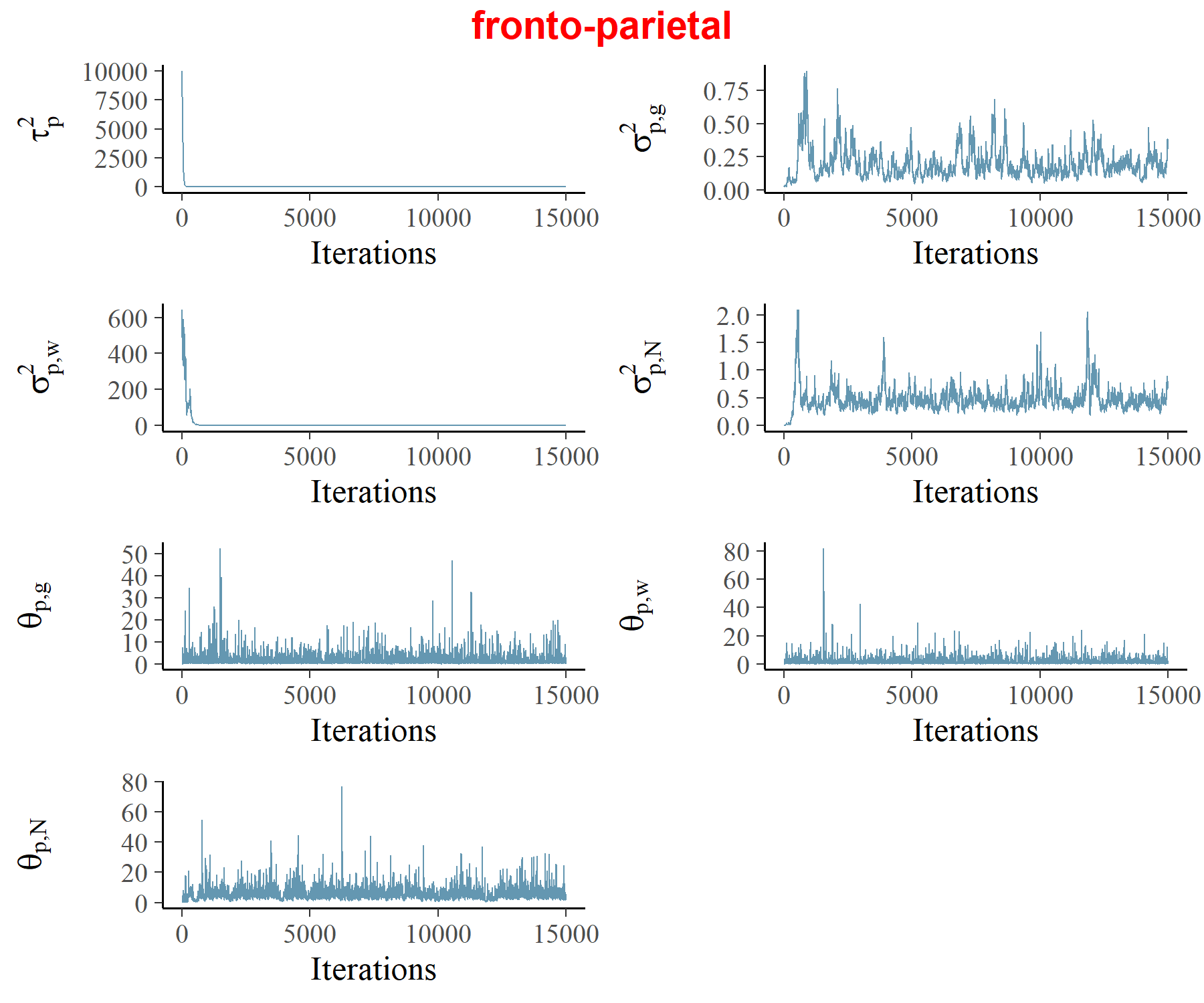}

\vspace{\smallskipamount}

\includegraphics[width=.48\textwidth]{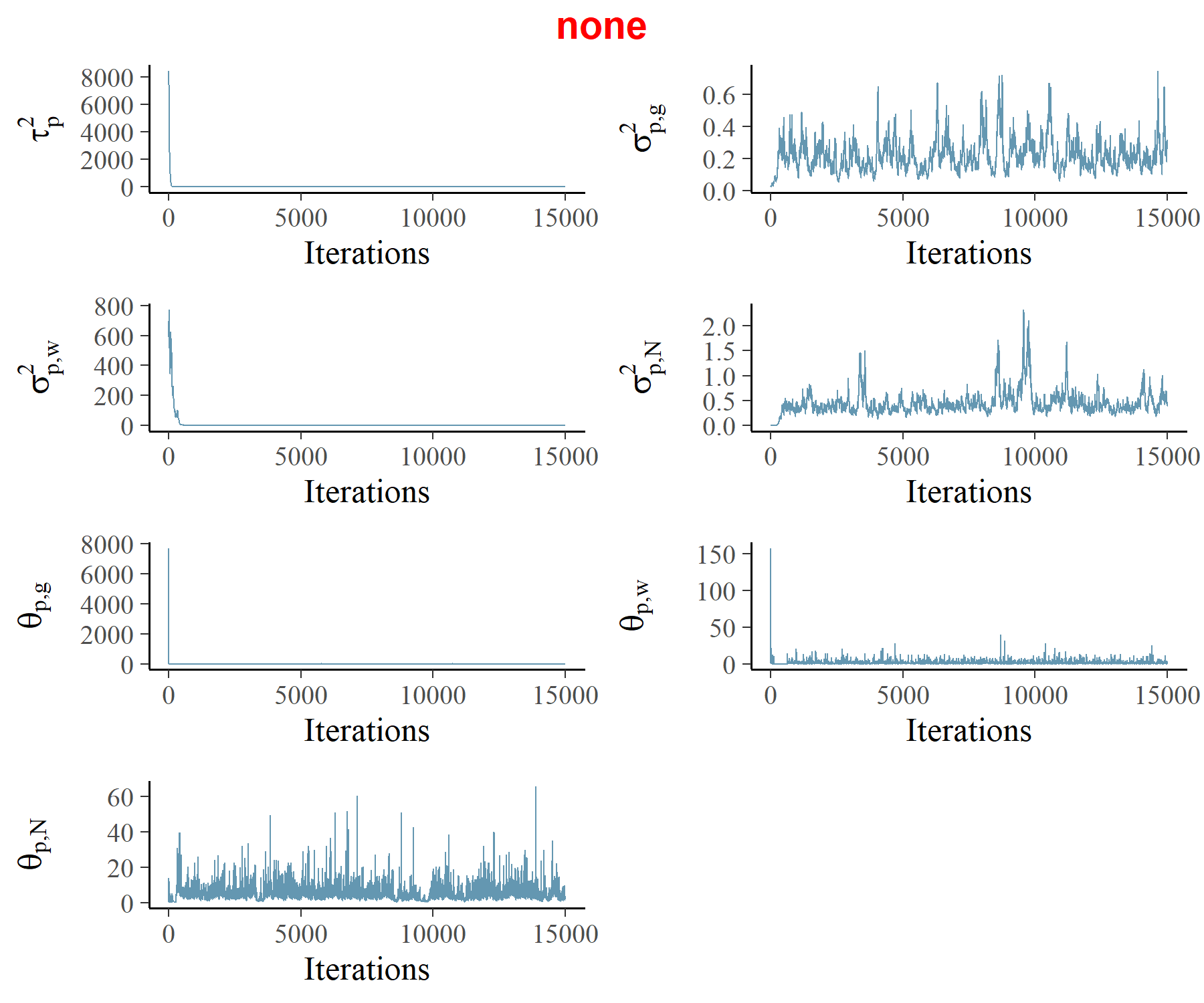}\hfill
\includegraphics[width=.48\textwidth]{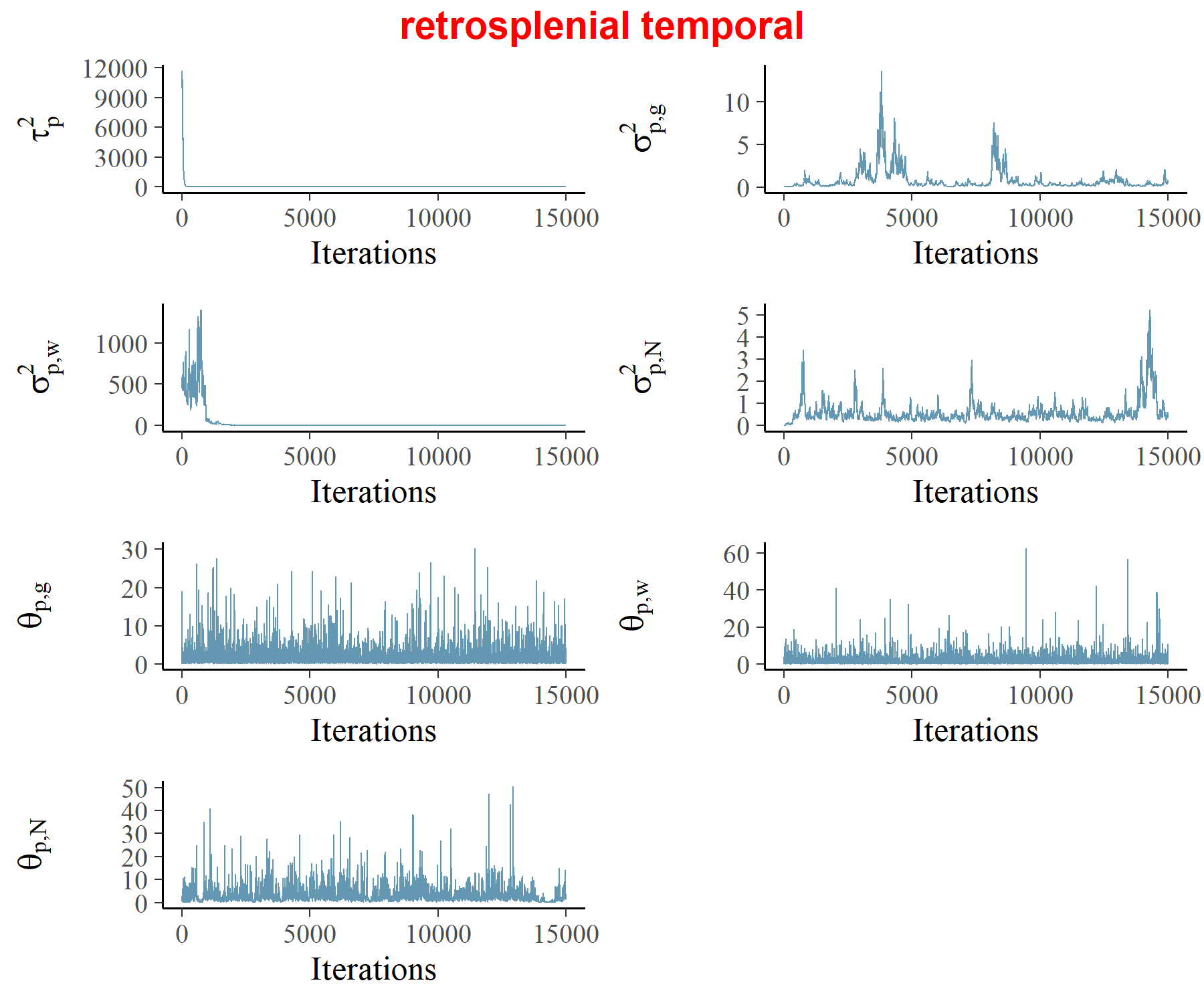}

\vspace{\smallskipamount}

\includegraphics[width=.48\textwidth]{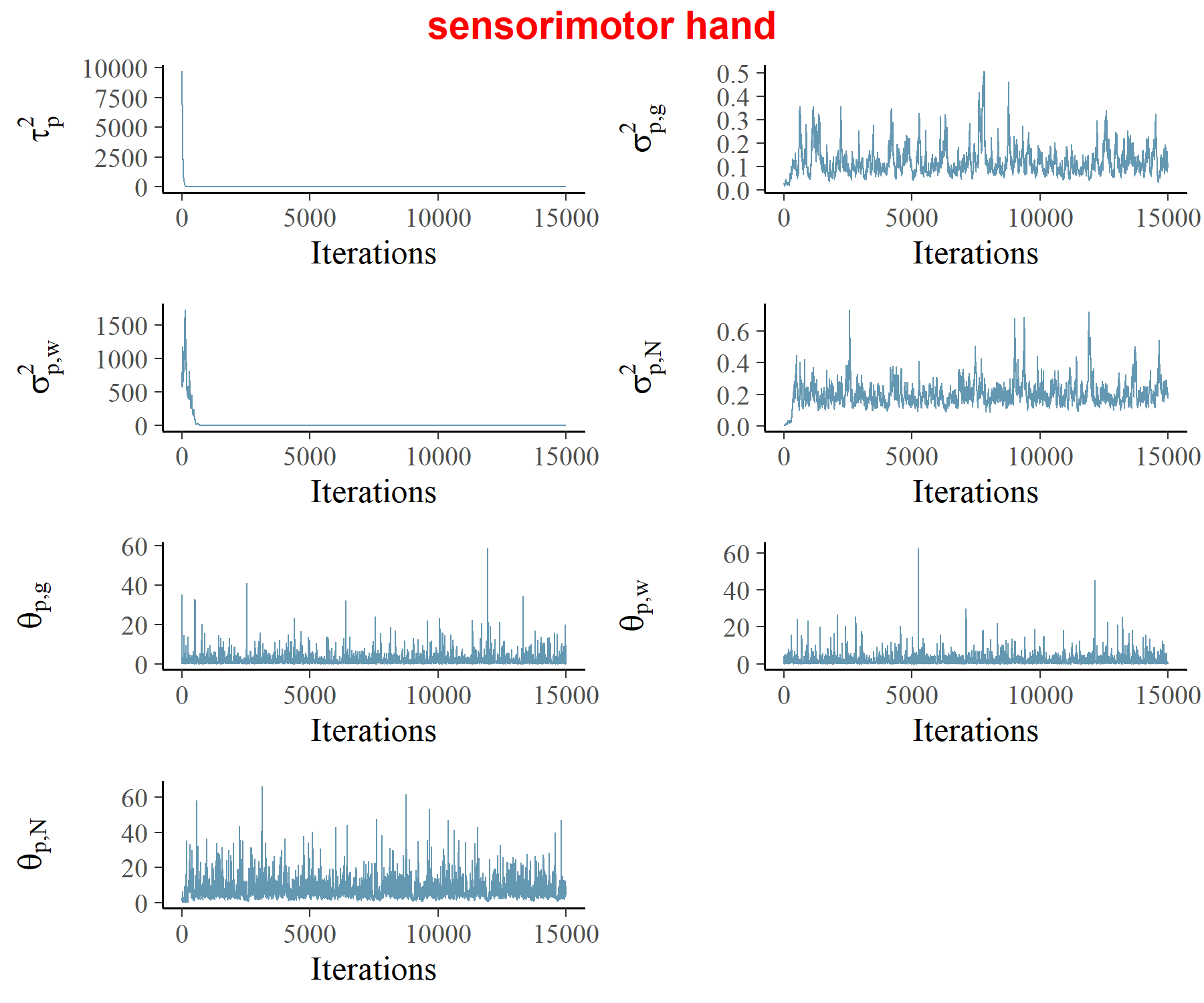}\hfill
\includegraphics[width=.48\textwidth]{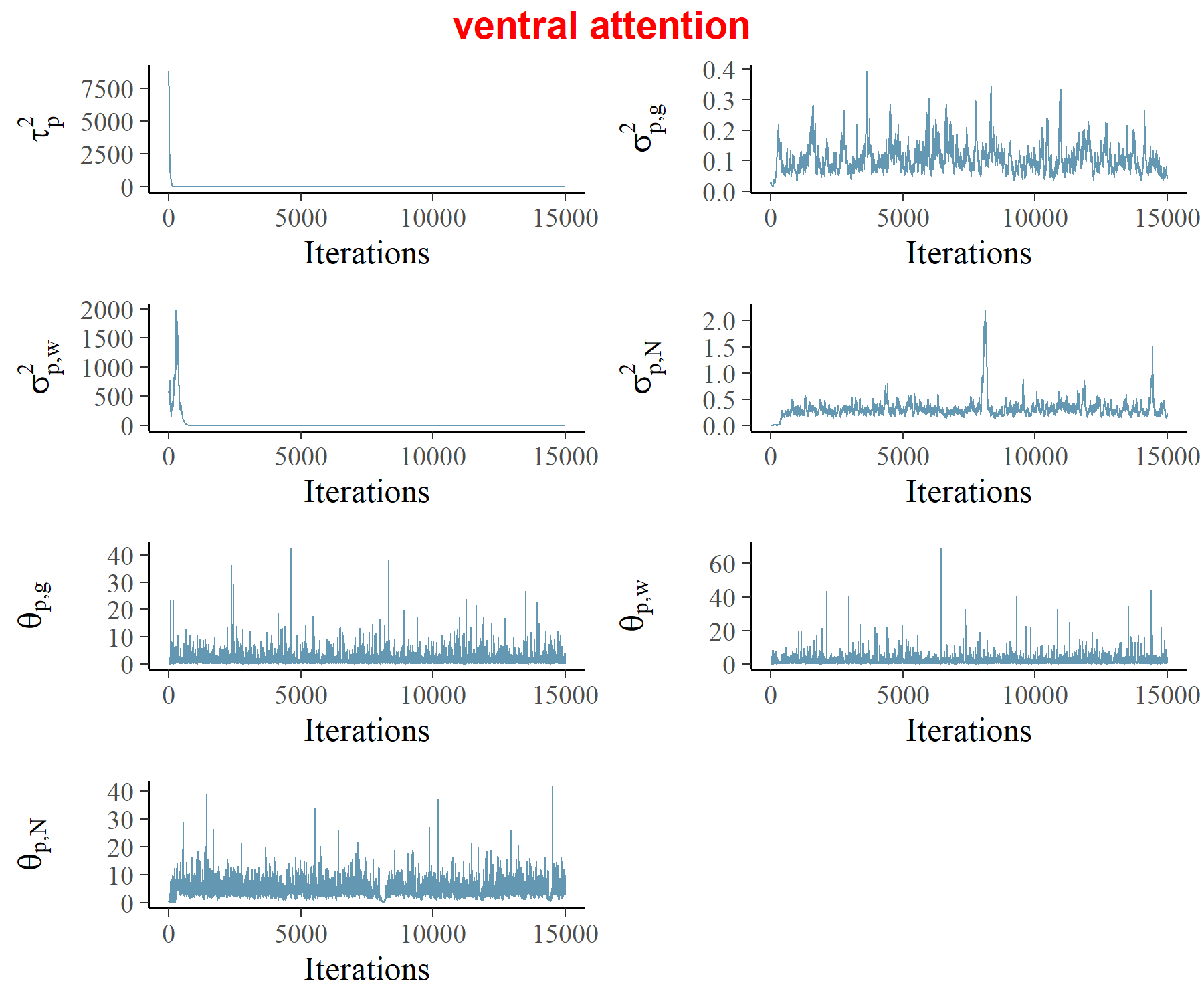}

\vspace{\smallskipamount}

\includegraphics[width=.48\textwidth]{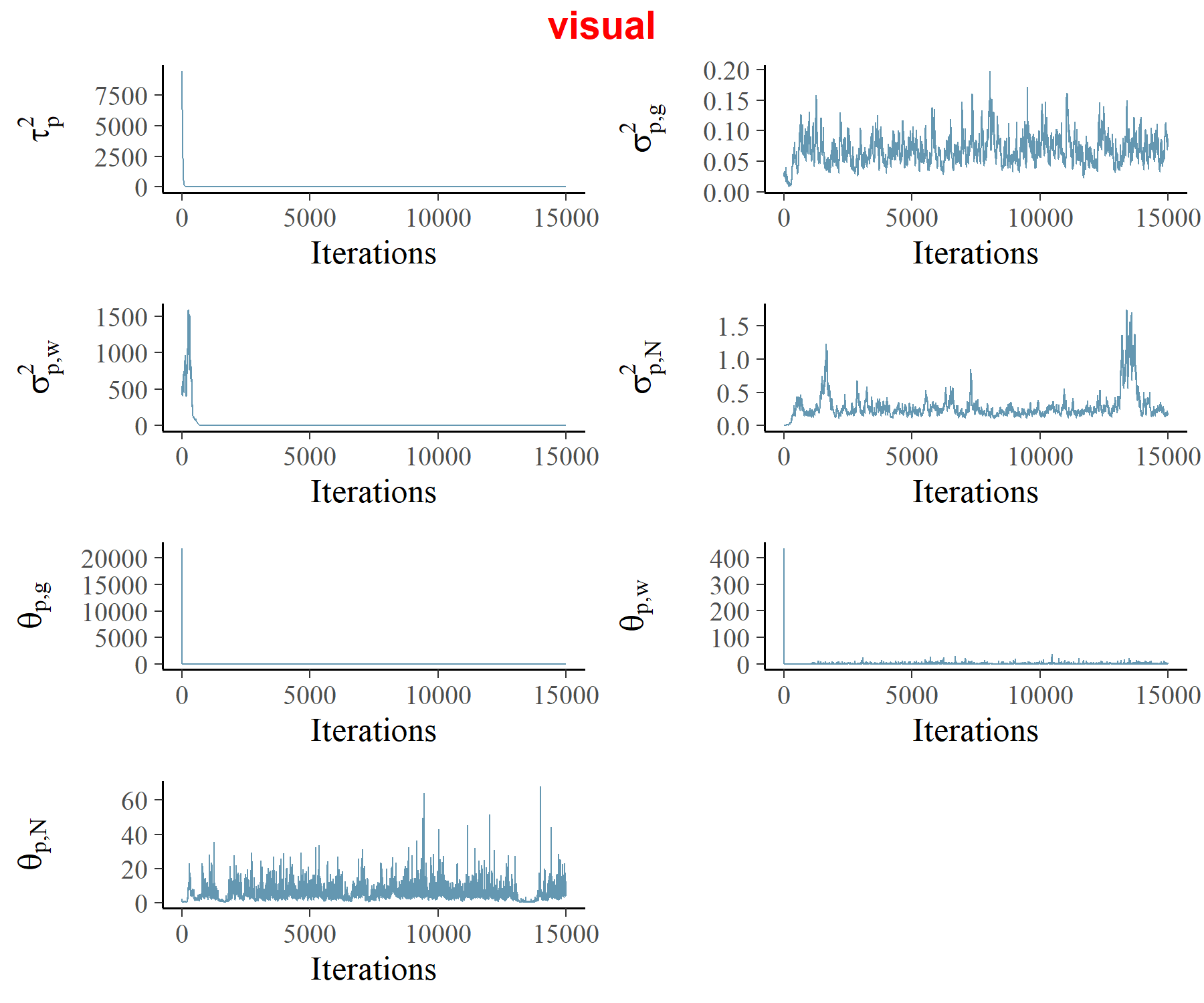}

\captionof{figure}{Trace plots of all parameters for the proposed AMO-GP model across subnetworks over 15,000 iterations for a single subsample ($n=40$). Each subnetwork contains 7 parameters, and results are shown for all 11 subnetworks.}
\label{traceplot}

\endgroup
\vspace{30mm}

\newpage
\subsection{Analysis of in-sample \texorpdfstring{$R^2$}{R-squared} across multiple subsamples}
\label{sup-subsec-insampleR2}

To assess the robustness of AMO-GP, we performed an additional analysis by generating 10 random subsamples of size equal to the original training set (i.e., $40$) and fitting AMO-GP to each. For each fitted subsample, we computed the in-sample $R^2$ across all subnetworks, with results summarized via boxplots in Figure~\ref{r2_subsample_fig}. These results provide insight into the stability of model performance across different realizations of the training data. Across subnetworks, mean $R^2$ values range from approximately $0.32$ (none) to $0.87$ (retrosplenial temporal), with moderate variability as indicated by standard deviations between $0.04$ and $0.11$. Subnetworks such as retrosplenial temporal ($0.87 \pm 0.06$), cingulo-parietal ($0.78 \pm 0.04$), and sensorimotor hand ($0.69 \pm 0.11$) exhibit consistently high $R^2$ values with relatively low dispersion, indicating stable and strong in-sample predictive performance across subsamples. In contrast, subnetworks such as none ($0.32 \pm 0.08$) and fronto-parietal ($0.35 \pm 0.08$) show lower mean $R^2$ with comparable variability, suggesting weaker and less stable fits. The remaining subnetworks, including auditory ($0.66 \pm 0.11$), cingulo-opercular ($0.59 \pm 0.09$), default ($0.54 \pm 0.07$), dorsal attention ($0.53 \pm 0.09$), ventral attention ($0.49 \pm 0.07$), and visual ($0.58 \pm 0.06$), demonstrate moderate predictive performance with intermediate variability. Overall, these results indicate that AMO-GP achieves robust and stable performance across subsamples, with clear heterogeneity in predictive strength across functional subnetworks.

\begin{figure}[H]
\centering
\includegraphics[width=0.8\textwidth]{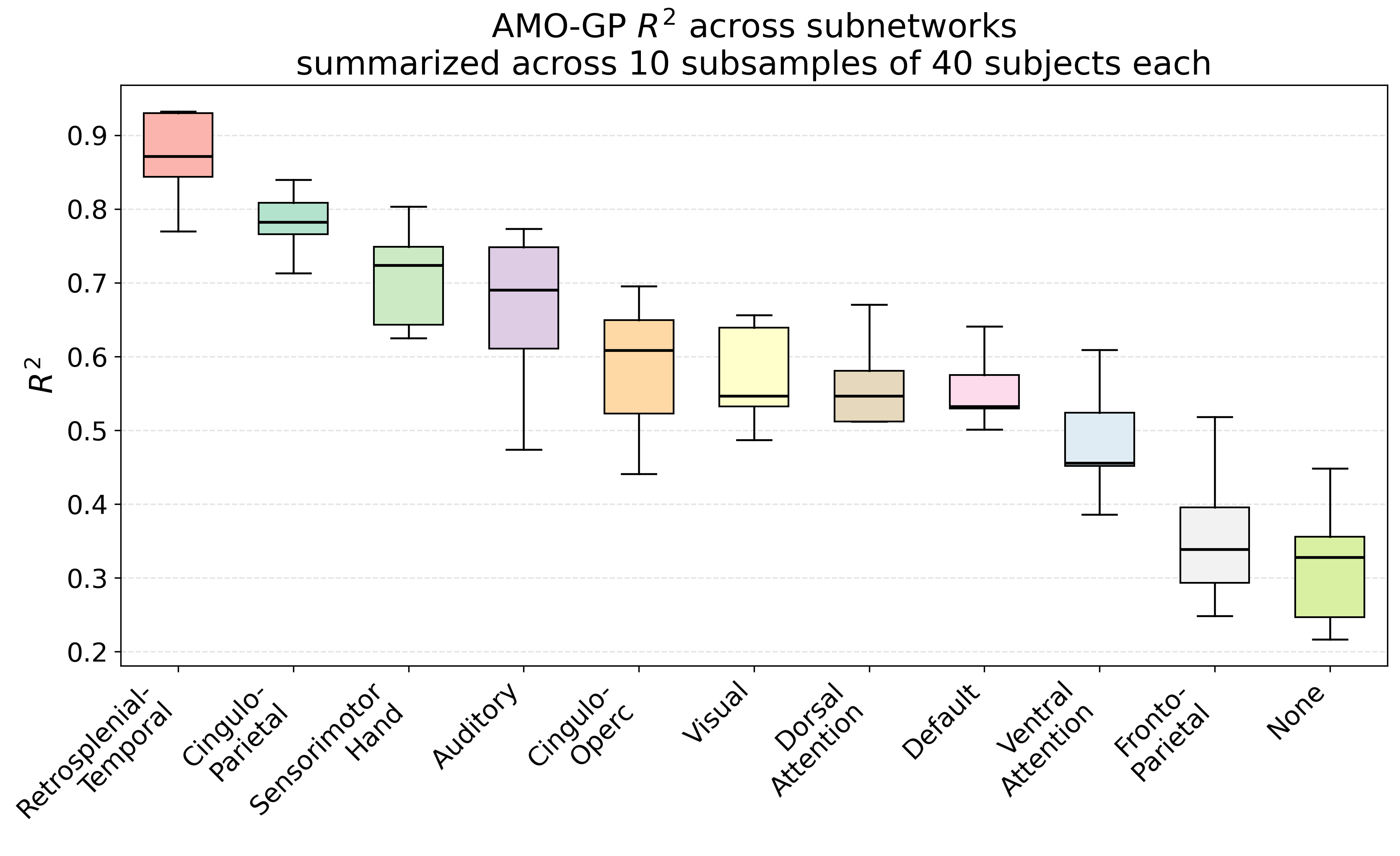}
\caption{Boxplots of in-sample $R^2$ values across functional subnetworks for AMO-GP, obtained from 10 independent subsamples of size 40. Subnetworks are ordered by median $R^2$.}
\label{r2_subsample_fig}
\end{figure}

\newpage
\subsection{Mean squared error (in-sample and predictive) for each subnetwork} \label{sup-subsec-ABCD-MSE}
\noindent Table~\ref{networkmse} shows in-sample mean squared error (MSE) and out-of-sample mean squared prediction error (MSPE) for all competing models for each subnetwork in the ABCD data analysis.
\vspace{0.25in}
\begin{table}[H]
\centering
\small
\caption{Functional subnetwork-level MSE and MSPE across models.}
\label{networkmse}
\begin{tabular}{llccccccc}
\toprule
Network & Metric & AMO-GP & CT+Net & SD+Net & Net & BART & NN & BIRD-GP \\
\midrule

auditory 
& MSE  & 0.10 & 0.13 & 0.13 & 0.13 & 0.15 & 0.36 & 0.25\\
& MSPE & 0.34 & 0.38 & 0.38 & 0.38 & 0.78 & 4.67 & 1.34\\
\midrule

cingulo-opercular 
& MSE  & 0.15 & 0.17 & 0.17 & 0.18 & 0.18 & 0.32 & 0.14\\
& MSPE & 0.35 & 0.37 & 0.37 & 0.37 & 0.66 & 5.40 & 0.97\\
\midrule

cingulo-parietal 
& MSE  & 0.06 & 0.06 & 0.06 & 0.07 & 0.10 & 0.30 & 0.58\\
& MSPE & 0.45 & 0.45 & 0.46 & 0.45 & 0.65 & 5.51 & 3.90\\
\midrule

default 
& MSE  & 0.14 & 0.18 & 0.19 & 0.22 & 0.23 & 0.39 & 0.20\\
& MSPE & 0.72 & 0.77 & 0.74 & 0.79 & 1.04 & 5.97 & 1.58\\
\midrule

dorsal attention 
& MSE  & 0.13 & 0.14 & 0.14 & 0.15 & 0.16 & 0.34 & 0.28\\
& MSPE & 0.64 & 0.64 & 0.64 & 0.62 & 0.88 & 5.99 & 1.04\\
\midrule

fronto-parietal 
& MSE  & 0.66 & 0.67 & 0.68 & 0.71 & 0.67 & 1.08 & 0.41\\
& MSPE & 4.19 & 4.16 & 4.18 & 4.15 & 4.60 & 10.28 & 15.29\\
\midrule

none 
& MSE  & 1.37 & 1.39 & 1.43 & 1.46 & 1.09 & 1.81 & 0.30\\
& MSPE & 4.46 & 4.47 & 4.47 & 4.48 & 4.85 & 9.19 & 10.95\\
\midrule

retrosplenial temporal 
& MSE  & 0.11 & 0.13 & 0.14 & 0.14 & 0.30 & 0.58 & 1.14\\
& MSPE & 1.25 & 1.29 & 1.27 & 1.28 & 1.49 & 7.23 & 3.08\\
\midrule

sensorimotor hand 
& MSE  & 0.05 & 0.06 & 0.06 & 0.06 & 0.11 & 0.25 & 0.26\\
& MSPE & 0.18 & 0.20 & 0.20 & 0.21 & 0.49 & 4.31 & 0.90\\
\midrule

ventral attention 
& MSE  & 0.17 & 0.18 & 0.17 & 0.19 & 0.21 & 0.46 & 0.35\\
& MSPE & 0.93 & 0.93 & 0.94 & 0.93 & 1.23 & 5.73 & 1.76\\
\midrule

visual 
& MSE  & 0.20 & 0.20 & 0.20 & 0.22 & 0.22 & 0.39 & 0.14\\
& MSPE & 0.79 & 0.79 & 0.80 & 0.81 & 1.27 & 6.15 & 2.44\\

\bottomrule
\end{tabular}
\end{table}

\setstretch{1}

\putbib[bibliography_paper]

\end{bibunit}

\end{document}